\newcommand{\Vars}[1]{{\mathcal X}}
\newcommand{\EN}{{\mathbb N}}
\newcommand{\free}[2]{F_{#2}^{#1}}
\newcommand{\freeL}[2]{\relstr{F}^{#1}(#2)}
\newcommand{\freeR}[3]{\relstr{F}^{#1}_{#3}(#2)}
\newcommand{\minor}[1]{\xrightarrow{#1}}
\newcommand{\Mminor}[3]{\xrightarrow[#2\rightarrow #3]{#1}}
\newcommand{\bla}{\rule{6pt}{.5pt}}
\DeclareMathOperator{\proj}{proj}
\DeclareMathOperator{\id}{id}
\DeclareMathOperator{\Pol}{Pol}
\DeclareMathOperator{\CSP}{CSP}
\DeclareMathOperator{\PCSP}{PCSP}
\DeclareMathOperator{\val}{val}
\DeclareMathOperator{\ex}{ex}
\DeclareMathOperator{\ar}{ar}
\newcommand{\size}[1]{\left | #1 \right |}
\newcommand{\inst}[1]{{\mathcal #1}}
\newcommand{\tuple}[1]{{\mathbf{#1}}}
\newcommand{\relstr}[1]{{\mathbf{#1}}}
\newcommand{\minion}[1]{{\mathcal{#1}}}
\newtheorem{prop}{Proposition}
\newtheorem{theorem}{Theorem}
\newtheorem{definition}{Definition}
\newtheorem{corollary}{Corollary}
\newtheorem{lemma}{Lemma}
\newlength{\problemoffset}
\newcommand{\Psearch}[4]{\begin{list}{}{
\setlength{\leftmargin}{\problemoffset}
\setlength{\rightmargin}{\problemoffset}
\setlength{\parsep}{0pt}
\setlength{\itemsep}{2pt}
\setlength{\topsep}{\itemsep}
\setlength{\partopsep}{\itemsep}
}
\item
{\textsc{#1}}
\item
  {{\sc Instance:} #2}
\item
  {{\sc Promise:} #3}
\item
  {{\sc Goal:} #4}
\end{list}
}
\title[CGT and PCSPs]{Combinatorial Gap Theorem \\
and Reductions between Promise CSPs}
\author{Libor Barto}
\thanks{Libor Barto has received funding from the European Research Council
(ERC) under the European Unions Horizon 2020 research and
innovation programme (grant agreement No 771005).}
\address{Faculty of Mathematics and Physics, 
Charles University, Prague, Czechia}
\email{libor.barto@gmail.com} 
\author{Marcin Kozik}
\address{Theoretical Computer Science Department,
Faculty of Mathematics and Computer Science,
Jagiellonian University, Krakow, Poland} 
\email{marcin.kozik@uj.edu.pl}
\begin{document}

\maketitle

\begin{abstract}
A value of a CSP instance is typically defined as a fraction of constraints that can be simultaneously met. 
We propose an alternative definition of a value of an instance and show that, for purely combinatorial reasons, a value of an unsolvable instance is bounded away from one;
we call this fact a gap theorem.

We show that the gap theorem implies NP-hardness of a gap version of the Layered Label Cover Problem. 
The same result can be derived from the PCP Theorem, but a full, self-contained proof of our reduction is quite short and the result can still provide PCP--free NP--hardness proofs for numerous problems.
The simplicity of our reasoning also suggests that weaker versions of Unique-Games-type conjectures, e.g., the $d$-to-1 conjecture, 
might be accessible and serve as an intermediate step for proving these conjectures in their full strength.   

As the second, main application we provide a sufficient condition under which a fixed template Promise Constraint Satisfaction Problem (PCSP) reduces to another PCSP. The correctness of the reduction hinges on the gap theorem, 
but the reduction itself is very simple.
As a consequence, we obtain that \emph{every} CSP can be canonically reduced to most of the known NP-hard PCSPs, such as the approximate hypergraph coloring problem.
\end{abstract}

\section{Introduction}
Paul would like to find an assignment from $V$ to $A$ -- an element of $A^V$-- that simultaneously satisfies a collection of local constraints. Each constraint demands that the restriction of the assignment onto a subset $W \subseteq V$ of size at most $m$
is in a prescribed subset of $A^W$. We call $V$ and $A$ together with such a collection of local constraints an \emph{instance of $m$-$\CSP$ over $A$}
and denote it by $\Phi$; Paul is looking for a \emph{solution} to $\Phi$.

Paul asks Carole to provide, for some specified $k \geq m$, a collection of partial assignments for $\Phi$: functions $\inst I(U) \in A^U$, where $U$ runs through all $k$-element subsets of $V$~(we write $U\in \binom{V}{k}$), 
such that
\begin{enumerate}
    \item each function $\inst I(U)$ is a partial solution to $\Phi$, 
    i.e., it satisfies every constraint 
    defined on $W \subseteq U$, and
    \item the partial solutions are consistent, i.e., for any $U_1$ and $U_2$, $\proj_{U_1 \cap U_2} \inst I(U_1) = \proj_{U_1 \cap U_2} \inst I(U_2)$~(where $\proj_{U'} \inst I(U)$ denotes the restriction of $\inst I (U)$ to $U'$).
\end{enumerate}
If Carole provides such a collection, must a solution exist? 
Can Paul find a solution given Carole's answer?

The answer to both questions is, trivially, ``Yes''. Indeed, the consistency requirement ensures that all the $\inst I(U)$ are restrictions of a single function $f: V \to A$ and $f$ satisfies all the local constraints since the $\inst I(U)$ are partial solutions and~$k \geq m$.

Let us fix $A$ and natural numbers $m, d, k_0 > k_1$, and make Carol's task easier;
she provides two collections $\inst I_0, \inst I_1$ such that
\begin{enumerate}
    \item for every $U\in\binom{V}{k_i}$, the set $\inst I_i (U)\subseteq A^U$ consists of partial solutions to $\Phi$,
    \item every $\inst I_i(U)$ has no more than $d$-elements, and
    \item if $U_0\supseteq U_1$~(of sizes $k_0, k_1$), then some elements of $\inst I_0(U_0)$ and $\inst I_1(U_1)$ are consistent, i.e.,
    there is $f\in\inst I_0(U_0)$ such that  $\proj_{U_1}f\in \inst I_1(U_1)$.
\end{enumerate}
If Carole provides such collections, must a solution exist? 
Can Paul find a solution given Carole's answer?

Our main theorem provides a positive answer to these questions for each $A$, $m$, $d$ and suitable chosen $k_0$ and $k_1$.  
The property can be concisely stated  in terms of a combinatorial measure of quality of an $m$-CSP instance defined as follows. 
The \emph{$(k_0,k_1)$-value} of an instance $\Phi$, 
denoted $\val_{k_0,k_1}(\Phi)$, is the smallest $d$ for which
Carole can provide consistent collections,
and $\infty$ if no such collections exists.
The positive answer to the first question can now be stated as follows.

  \begin{theorem}[Combinatorial Gap Theorem]\label{thm:cgt}
    For every $A$, $m$ and $d$ there exists $k_0,k_1 \geq m$ such that for every instance $\Phi$ of $m$-$\CSP$ over $A$ either
    \begin{itemize}
      \item $\val_{k_0,k_1}(\Phi) = 1$~(i.e., $\Phi$ is solvable) or
      \item $\val_{k_0,k_1}(\Phi) > d$.
    \end{itemize}
  \end{theorem}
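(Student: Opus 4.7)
The plan is to prove the contrapositive: if there exist consistent collections $\inst I_0, \inst I_1$ of size at most $d$ for $\Phi$ (with $k_0, k_1$ chosen sufficiently large in terms of $A, m, d$), then $\Phi$ has a solution. I would proceed by induction on $d$.

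For the base case $d = 1$, fix $k_1 \geq m$ and $k_0 \geq 2k_1 - 1$. Each $\inst I_i(U_i)$ is a singleton $\{f_i(U_i)\}$ and the consistency condition becomes $\proj_{U_1} f_0(U_0) = f_1(U_1)$ whenever $U_0 \supseteq U_1$. For any vertex $v$ and any two $k_1$-subsets $U_1, U_1'$ containing $v$, the bound $|U_1 \cup U_1'| \leq 2k_1 - 1 = k_0$ furnishes some $U_0 \supseteq U_1 \cup U_1'$, so $f_1(U_1)(v) = f_0(U_0)(v) = f_1(U_1')(v)$. Hence $f(v) := f_1(U_1)(v)$ is well defined on $V$, and $f \colon V \to A$ satisfies every $m$-ary constraint because each constraint lies in some $k_1$-subset $U_1$ on which $f$ coincides with the partial solution $f_1(U_1)$.

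For $d \geq 2$, suppose parameters $k_0^\ast, k_1^\ast$ work for $d - 1$. I would apply Ramsey's theorem to the finite type space of the $\inst I_1(U_1)$'s, viewed as subsets of $A^{[k_1]}$ via order-preserving bijections $U_1 \leftrightarrow [k_1]$, and similarly for the $\inst I_0(U_0)$'s. Choosing $k_0, k_1$ sufficiently larger than the relevant Ramsey numbers, any instance on a big enough $V$ admits a substructure $V^\ast \subseteq V$ of cardinality at least $k_0^\ast$ on which both $\inst I_0$-types and $\inst I_1$-types are uniform. The uniformity should then let us canonically discard one partial solution from each $\inst I_i(U)$ for $U \subseteq V^\ast$, yielding consistent collections of size at most $d - 1$ on the restricted instance $\Phi|_{V^\ast}$; invoking the inductive hypothesis produces a solution on $V^\ast$, which is lifted back to a solution on $V$ using the original collections.

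The hard part will be the canonical discard step: the element removed from $\inst I_1(U_1)$ must not be the unique witness through which some element of $\inst I_0(U_0)$ projects consistently, for every $U_0 \supseteq U_1$ in $V^\ast$, and a symmetric condition must hold for removals in $\inst I_0$. Uniformity of types should collapse this from a per-pair check to a single type-level check, so that a canonical choice exists; a secondary Ramsey step may be required to calibrate the subset-size parameter $k_0$ of the given collections against the inductive $k_0^\ast$. A further subtlety is the lifting from $V^\ast$ back to $V$ via a second invocation of base-case gluing, and the edge case of instances with $|V|$ below the Ramsey threshold needs its own direct argument.
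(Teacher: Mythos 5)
Your base case ($d=1$) is correct and close in spirit to the paper's, but the inductive step takes a genuinely different route — Ramsey on order-types of the systems to find a uniform sub-universe $V^*$ — and this route has a fundamental gap. The paper never leaves the original variable set $V$: it proves a stronger Main Theorem by induction on the entire sequence of PAS values $(d_0,\dots,d_r)$, using ``property $P$/property $Q$'' of pairs $(X,f)$ (essentially, whether $f$ on $X$ is, or is not, extendable to a partial solution through any small witnessing set $W$). To decrement the value it refines all the PASes via carefully chosen extension maps $\ex_i$ and then prunes from $\inst I_0$ exactly those functions extending a fixed $f$ with property $P$, using the $Q$-property in $\inst I_i$, $i\ge 1$, to certify that the pruned elements were never needed for consistency. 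All of this happens on the same $V$, so the solution produced at the end is a genuine solution to $\Phi$.

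The concrete gap in your plan is the lifting step. Your inductive hypothesis, applied to the restricted collections, only certifies that the subinstance $\Phi|_{V^*}$ is solvable, where $|V^*|$ is some fixed Ramsey-sized quantity. But the theorem must conclude that $\Phi$ (on all of $V$) is solvable; a map $V^*\to A$ satisfying the constraints lying inside $V^*$ says nothing about constraints touching $V\setminus V^*$, and since the original collections have value $d>1$ you cannot run the $d=1$ gluing argument to extend it. This is not a peripheral subtlety: an assignment on a fixed-size subset is \emph{always} locally extendable (every constraint is locally satisfiable by assumption), so ``solvable on $V^*$'' is no evidence of global solvability, which is the entire content of the theorem. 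There are two further, more repairable issues: (i) Ramsey only yields a uniform $V^*$ once $|V|$ exceeds an enormous threshold, so ``small $|V|$'' is not an edge case but the generic one; (ii) the canonical-discard step is unjustified --- even with uniform order-types of $\inst I_0(U_0)$, $\inst I_1(U_1)$ and of the projection relation between them, it is not clear that any single element can be removed from every $\inst I_i(U)$ without breaking the consistency condition for \emph{some} pair $U_0\supseteq U_1$, and you would in any case need to Ramsey-uniformize all embeddings of $k_1$-sets into $k_0$-sets, not just initial segments. Compare this with the paper's pruning: it removes only the extensions of one $f$ from $\inst I_0$, and proves they were unused by exploiting the $Q$-property, with no appeal to symmetry.
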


In fact, we prove in \cref{cor:lcgt} (the Layered Combinatorial Gap Theorem) a stronger version that permits more than two collections and only requires a particularly weak form of consistency in the definition of the value of an instance. This fact is in turn a consequence of \cref{thm:main} (the Main Theorem) that does not require the underlying $m$-$\CSP$ instance $\Phi$ in the statement and provides an affirmative answer to the second question -- Paul can compute the solution in polynomial time. 

Our main application is in providing reductions between Promise Constraint Satisfaction Problems. However, let us first discuss the connection to simpler and more standard notions of value.

\subsection{Baby PCP Theorem}

The most straightforward notion of value of an instance $\Phi$ is the following \cite{daBook}: the (standard) \emph{value} of $\Phi$ is
the largest $\varepsilon$ ($0 \leq \varepsilon \leq 1$) such that there exists a function $f: V \to A$ that satisfies $\varepsilon$ fraction of the constraints. 

This is the standard  measure in the area of optimization and approximation algorithms. The following theorem, which follows from the PCP Theorem~\cite{ALMSS98,AS98} and the Parallel Repetition Theorem~\cite{Raz98},
is then a starting point for many NP-hardness results in the area. 

\begin{theorem}[\cite{ALMSS98,AS98,Raz98}] \label{thm:pcp}
For every $1 \geq \varepsilon > 0$ there exists $A$ such that it is NP-hard to distinguish solvable instances of 2-$\CSP$ over $A$  from those whose value is smaller than $\varepsilon$.
\end{theorem}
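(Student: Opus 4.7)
The plan is to combine the two classical ingredients cited in the statement: the basic PCP Theorem of~\cite{ALMSS98,AS98}, which produces a \emph{constant} gap, and the Parallel Repetition Theorem of~\cite{Raz98}, which amplifies any constant gap to an arbitrarily small one.

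First I would invoke the PCP Theorem in its standard gap form for $3$-SAT: there exists a constant $\delta_0>0$ such that it is NP-hard to distinguish satisfiable $3$-SAT instances from those in which no assignment satisfies more than a $1-\delta_0$ fraction of the clauses. I would then pass to $2$-$\CSP$ via the standard clause--variable reduction to Label Cover: for each clause introduce a variable ranging over the at most seven satisfying assignments of that clause, for each Boolean variable a variable ranging over $\{0,1\}$, and add one projection constraint linking each clause variable to each of its three Boolean variables. Satisfiability is preserved and the gap degrades by only a constant factor, giving a constant $\delta_1>0$ and a finite alphabet $A_1$ such that it is NP-hard to distinguish satisfiable $2$-$\CSP$ instances over $A_1$ from those of value at most $1-\delta_1$.

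Next I would amplify the gap by applying the Parallel Repetition Theorem. Taking the $k$-fold parallel repetition yields a $2$-$\CSP$ instance over the product alphabet $A_1^k$ whose value is $1$ on the yes--side and at most $(1-\delta_2)^{\Omega(k)}$ on the no--side, where $\delta_2>0$ depends only on $\delta_1$ and $|A_1|$. Given the target $\varepsilon>0$, pick $k$ large enough that this bound falls below $\varepsilon$ and set $A=A_1^k$. The alphabet $A$ has constant size depending only on $\varepsilon$, the reduction runs in polynomial time, and so NP-hardness is preserved.

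The main difficulty is, of course, Raz's theorem itself; treating it as a black box, the rest is essentially bookkeeping. A minor subtlety worth noting is that Parallel Repetition is cleanest to state for projection-type constraints, which is exactly what the clause--variable reduction produces, so the two ingredients fit together seamlessly without any further structural work.
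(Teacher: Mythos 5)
Your outline is correct and is precisely the route the paper indicates: the paper does not prove this theorem but cites it as following from the PCP Theorem~\cite{ALMSS98,AS98} together with the Parallel Repetition Theorem~\cite{Raz98}, and your clause--variable reduction to projection Label Cover followed by $k$-fold parallel repetition is the standard way to assemble those two black boxes into the stated gap result.
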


As an immediate consequence of the Combinatorial Gap Theorem, we obtain a weaker version of \cref{thm:pcp}, which we call the Baby PCP Theorem. Its formulation uses yet another notion of value:
the \emph{combinatorial value} of $\Phi$ is the smallest integer $d$ such that there exists a function $f$ from $V$ to the set of at most $d$-element subsets of $A$ such that, for every local constraint $\varphi \subseteq A^W$, the projection of $f$ onto $W$ intersects $\varphi$. 

\begin{theorem}[Baby PCP Theorem] \label{thm:babyPCP} 
For every integer $d \geq 1$ there exists $A$ such that it is NP-hard to distinguish solvable instances of 2-$\CSP$ over $A$  from those whose combinatorial value is greater than $d$. 
\end{theorem}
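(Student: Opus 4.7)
The plan is to reduce a standard NP-hard problem---$3$-SAT, viewed as an instance of $3$-$\CSP$ over $B=\{0,1\}$---to the gap problem in the statement, using \cref{thm:cgt} as a black box. The key observation is that $\val_{k_0,k_1}(\Phi)$ coincides with the combinatorial value of a natural $2$-$\CSP$ encoding $\Phi'$ of the consistency requirements appearing in the definition of $\val_{k_0,k_1}$.

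Given a $3$-SAT instance $\Phi$ on variables $V$, I first invoke \cref{thm:cgt} with $B=\{0,1\}$, $m=3$, and the target gap $d$ to obtain parameters $k_0>k_1\geq 3$ that depend only on $d$. I then construct $\Phi'$ over the fixed alphabet $A=B^{k_0}$ whose variables are indexed by $\binom{V}{k_0}\cup\binom{V}{k_1}$; after fixing an ordering, I identify partial assignments on any $U$ with elements of $A$ (padding coordinates outside $U$ by $0$). For each pair $U_0\supseteq U_1$ with $U_0\in\binom{V}{k_0}$ and $U_1\in\binom{V}{k_1}$, I add the binary constraint requiring the value at $U_0$ to encode a partial solution to $\Phi$ on $U_0$ and the value at $U_1$ to encode its restriction to $U_1$. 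Since $k_0$ is a constant, $|A|=2^{k_0}$ is fixed and $\Phi'$ has size polynomial in $|V|$, so this is a polynomial-time reduction to a $2$-$\CSP$ over a fixed alphabet.

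For correctness I would show that the combinatorial value of $\Phi'$ equals $\val_{k_0,k_1}(\Phi)$. A $(k_0,k_1)$-value witness $(\inst I_0,\inst I_1)$ of size $\leq d$ directly yields a combinatorial-value $\leq d$ witness $f(U):=\inst I_i(U)$ for $\Phi'$, and each binary constraint is intersected by the consistency clause. Conversely, given $\leq d$-subsets $f(U)\subseteq A$ witnessing combinatorial value $\leq d$ for $\Phi'$, the binary constraints force any element of $f(U)$ that participates in a satisfying pair to encode a partial solution on $U$; discarding irrelevant elements yields collections of partial solutions of size $\leq d$ satisfying the consistency condition, so $\val_{k_0,k_1}(\Phi)\leq d$. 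Combining this equivalence with \cref{thm:cgt}: if $\Phi$ is satisfiable then $\Phi'$ has combinatorial value $1$, while if $\Phi$ is unsatisfiable then $\val_{k_0,k_1}(\Phi)>d$ and hence the combinatorial value of $\Phi'$ exceeds $d$.

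No single step poses a real obstacle once \cref{thm:cgt} is granted; the main point of care is to encode the partial-solution requirement into the binary constraints so that the combinatorial value of $\Phi'$ faithfully matches $\val_{k_0,k_1}(\Phi)$, despite the former permitting arbitrary $\leq d$-subsets of $A$ rather than partial solutions.
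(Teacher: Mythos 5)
Your proof is correct and is essentially the same reduction the paper uses (it is the $r=1$ specialization of the construction in the proof of the Baby Layered PCP Theorem, Theorem~\ref{thm:babyLPCP}, where variables are indexed by $\binom{V}{k_i}$, values are partial solutions, and constraints enforce projection-consistency). The only addition on your side—which the paper brushes aside as ``inessential'' by allowing per-variable domains in its Layered Label Cover formulation—is the explicit normalization to a single alphabet $A=B^{k_0}$ via padded encodings, with the ``is a valid encoding of a partial solution'' check folded into the binary constraints; your discarding argument in the reverse direction is sound since every variable of $\Phi'$ participates in some constraint.
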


Note that \cref{thm:babyPCP} is indeed a consequence of \cref{thm:pcp} by a probabilistic argument that goes as follows. If $f$ witnesses combinatorial value at most $d$ and we define $f'$ by choosing $f'(v)$ from $f(v)$ uniformly at random (independently for each $v$), then the probability that $f'$ satisfies a constraint is at least $1/d^2$ and so is then  the expected fraction of satisfied constraints. Therefore, the trivial reduction (i.e., not changing the input) reduces the problem in \cref{thm:pcp} with $\varepsilon = 1/d^2$ to the problem in \cref{thm:babyPCP}.

On the other, \cref{thm:babyPCP} is still sufficient for some NP-hardness results (such as many known NP-hard PCSPs to be discussed in later sections). 
Our proof of \cref{thm:babyPCP} is based on a very simple reduction from any NP-hard $m$-CSP and a proof of its correctness follows easily from the Main Theorem which is itself not excessively complex.
Most importantly the result suggests that weaker, combinatorial versions of some 
refinements of~\cref{thm:pcp} might be more accessible. We refer to \cref{sec:conlusion} for further discussion.

\subsection{Reductions Between CSPs}

In order to state our main application of the Main Theorem (\cref{thm:main}) we first give some background on the fixed template CSP (in this subsection) and fixed template Promise CSPs (in \cref{subsec:PCSP}). Our contributions are then discussed in \cref{subsec:PCSPnews}. The statements of theorems are informal in that we omit some obvious assumptions and we postpone some definitions to later sections.

The fixed template finite domain CSP is a framework for expressing many computational problems such as various versions of logical satisfiability, graph
coloring, and systems of equations.
A \emph{template} can be specified as a relational structure $\relstr A = (A; R_1, \dots, R_l)$, where $A$ is a finite set called \emph{the domain} and each $R_j$ is a relation of some arity $\ar_j$, i.e., a subset of $A^{\ar j}$. The \emph{CSP over $\relstr{A}$}, denoted $\CSP(\relstr A)$, is (in its search version) the problem of finding an assignment $V \to A$ that satisfies given local constraints as above, with the restriction that each constraint is of the form $\{f \in A^W: (f(w_1),f(w_2), \dots, f(w_{\ar i})) \in R_i\}$, where $1 \leq i \leq l$ and $W = \{w_1, \dots, w_{\ar i}\} \subseteq V$. In the decision version of $\CSP(\relstr A)$ we only want to decide whether such an assignment exists. Our results work for both versions and we do not carefully distinguish between them in the introduction.

Note that for $\relstr{A}$ consisting of all relations on $A$ of arity $m$, the CSP over $\relstr{A}$ is exactly the $m$-CSP over $A$. By choosing appropriate structures with a two-element domain we obtain various versions of satisfiability, such as $k$-SAT, HORN-$k$-SAT, NAE-$k$-SAT, etc. Important class of examples on larger domains is the CSP over $\relstr{K}_n$, the set $[n]=\{1,\dots,n\}$ together with the disequality relation, which is (essentially) the $n$-coloring problem for graphs. More generally, the CSP over $\relstr{NAE}_n^k$, the set $[n]$ together with the $k$-ary not-all-equal relation, is the $n$-coloring problem for $k$-uniform hypergraphs. We refer to surveys in~\cite{KZ17} for more details and examples, as well as many variants of the fixed template CSP framework.

In~\cite{FV98}, Feder and Vardi conjectured that each $\CSP(\relstr A)$ is either solvable in polynomial time or NP-complete. Their conjecture inspired a very active research program in the last 20 years~\cite{BKW17,KZ17}, which culminated in a recent confirmation of the conjecture obtained independently by Bulatov~\cite{Bul17} and Zhuk~\cite{Zhu17,Zhu20}. 

A fundamental theorem, which initiated a rapid development of the subject, crystallized in the series of papers by Jeavons et al., e.g.~\cite{JCG97,Jea98}. It gives a sufficient condition for the existence of a polynomial-time reduction between two CSPs in terms of multivariate functions that preserve relations of the templates, called \emph{polymorphisms} (see \cref{sec:PCSPs}). Denoting $\Pol(\relstr A)$ the set of all polymorphisms of $\relstr A$, the theorem can be stated as follows.

\begin{theorem}[\cite{Jea98}] \label{thm:jeavons}
    If $\Pol(\relstr{A}_1) \subseteq \Pol(\relstr{A}_2)$,  then $\CSP(\relstr{A}_2)$ is reducible to $\CSP(\relstr{A}_1)$. 
\end{theorem}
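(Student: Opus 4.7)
The plan is to reduce this to the classical Inv-Pol Galois correspondence (Bodnarchuk--Kaluzhnin--Kotov--Romov and Geiger) combined with a gadget construction. Concretely, the first step is to invoke the Galois theorem: the hypothesis $\Pol(\relstr{A}_1) \subseteq \Pol(\relstr{A}_2)$ implies that every relation $R$ of $\relstr{A}_2$ lies in the relational clone generated by the relations of $\relstr{A}_1$, i.e., $R$ is \emph{primitively positively definable} from the relations of $\relstr{A}_1$. This means that for each relation $R_j$ of $\relstr{A}_2$, of arity $n_j$, there is a formula
\[
    R_j(x_1, \dots, x_{n_j}) \equiv \exists y_1 \dots \exists y_{t_j} \; \bigwedge_{k} S_{j,k}(\tuple{z}_{j,k}),
\]
where each $S_{j,k}$ is a relation of $\relstr{A}_1$ (or equality) and each $\tuple{z}_{j,k}$ is a tuple of variables from $x_1, \dots, x_{n_j}, y_1, \dots, y_{t_j}$.

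The second step is the actual reduction. Given an instance $\Phi$ of $\CSP(\relstr{A}_2)$ with variable set $V$ and constraints $C_1, \dots, C_s$, construct an instance $\Phi'$ of $\CSP(\relstr{A}_1)$ as follows: keep $V$ as variables, and for each constraint $C_\ell$ of $\Phi$ (which uses some relation $R_j$ applied to a tuple of variables from $V$) introduce fresh existential variables corresponding to $y_1, \dots, y_{t_j}$ and add to $\Phi'$ all the atomic constraints $S_{j,k}(\tuple{z}_{j,k})$ from the pp-definition, with the $x$'s instantiated by the variables appearing in $C_\ell$. Since the templates are fixed, each constraint of $\Phi$ is replaced by a constant-size gadget, so the construction runs in polynomial (in fact linear) time, and its output size is linear in the size of $\Phi$.

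The third step is correctness, which is essentially immediate from the semantics of pp-definitions. An assignment $f \colon V \to A$ satisfies a constraint $C_\ell$ of $\Phi$ if and only if the corresponding tuple lies in $R_j$, which by the pp-definition holds iff one can find values for the introduced existential variables $y_1, \dots, y_{t_j}$ making all the gadget atoms $S_{j,k}$ true. Collecting these choices over all $\ell$ produces an extension $f'$ of $f$ solving $\Phi'$; conversely, the restriction to $V$ of any solution to $\Phi'$ satisfies every $C_\ell$. Hence $\Phi$ is satisfiable iff $\Phi'$ is, and, in the search version, a solution to $\Phi'$ yields a solution to $\Phi$ by restriction to $V$.

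The only nontrivial ingredient is the Galois theorem invoked in the first step; the combinatorial gadget part is routine. This is where the proof really rests, and in a self-contained treatment one would either cite the Pol-Inv duality or sketch it via the usual ``indicator problem'' argument: the relation $R$ being preserved by $\Pol(\relstr{A}_1)$ is used to show that $R$ occurs as (a projection of) a subuniverse of a suitable power of $\relstr{A}_1$, which is exactly a pp-definition.
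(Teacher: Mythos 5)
The paper states this theorem as a background result, citing~\cite{Jea98} without giving its own proof, so there is nothing to compare against line by line. Your argument is the standard and correct one: the contravariance of the Pol--Inv Galois connection (Geiger, Bodnarchuk et al.) converts the polymorphism inclusion into the statement that every relation of $\relstr{A}_2$ is primitively positively definable over $\relstr{A}_1$, and the gadget replacement together with the semantics of existential quantification then gives a polynomial-time (indeed linear-size) reduction that works for both the decision and search versions. This is exactly how Jeavons proved it, and there is no gap; the one thing worth keeping in mind is that the Galois step genuinely uses finiteness of the domain.
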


This theorem was later made more applicable in~\cite{BJK05} and then in~\cite{BOP18} by replacing the inclusion by weaker requirements, thus providing more reductions. A modern formulation is in terms of minion homomorphisms (see \cref{sec:PCSPs}) as follows.

\begin{theorem} \label{thm:CSPred}
  If $\Pol(\relstr{A}_1)$ has a minion homomorphism to $\Pol(\relstr{A}_2)$,  then $\CSP(\relstr{A}_2)$ is reducible to $\CSP(\relstr{A}_1)$. 
\end{theorem}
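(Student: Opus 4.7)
The plan is to follow the standard gadget construction that converts a minion homomorphism into a polynomial-time reduction. Let $\xi : \Pol(\relstr{A}_1) \to \Pol(\relstr{A}_2)$ denote the given minion homomorphism; recall that this means $\xi$ preserves arities and commutes with taking minors.

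Given an instance $\inst{I}$ of $\CSP(\relstr{A}_2)$ with variable set $V$, the reduction produces an instance $\inst{I}'$ of $\CSP(\relstr{A}_1)$ by replacing each constraint $C = ((v_1, \ldots, v_k), R)$ of $\inst I$ with a small ``free gadget'' over $\relstr{A}_1$: a sub-instance whose satisfying assignments are exactly the operations $f : A_2^k \to A_1$ obtained as $k$-ary polymorphisms of $\relstr{A}_1$ restricted to the indices given by the tuples of $R$. The shared variables of $\inst I$ appearing in multiple constraints are identified as shared distinguished vertices of the corresponding gadgets, forcing the encoded polymorphisms to agree on the appropriate minors. Since $|\relstr{A}_1|$ and $|\relstr{A}_2|$ are fixed constants and each gadget has constant size, $\inst I'$ is produced in polynomial time.

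For completeness, any solution $\sigma : V \to A_2$ to $\inst{I}$ induces a solution to $\inst{I}'$: in each gadget, take the projection polymorphism of $\relstr A_1$ onto the coordinate of $R$ indexed by the restriction of $\sigma$ to that constraint. For soundness, one extracts from a solution to $\inst{I}'$ a polymorphism $f_v \in \Pol(\relstr{A}_1)$ for each $v \in V$; applying the minion homomorphism yields $g_v = \xi(f_v) \in \Pol(\relstr{A}_2)$, and one sets $\sigma(v)$ to be the value of $g_v$ evaluated at the canonical ``generic'' tuple of $A_2$-elements determined by the gadget.

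The crucial step, and main expected obstacle, is soundness: one must verify that $\sigma$ is well-defined when $v$ appears in two different constraints, and that $\sigma$ satisfies each constraint of $\inst{I}$. Both facts are encoded by the minor-preservation property of $\xi$. Passing from $f_v$ read off from one gadget to that read off from another corresponds to a minor operation (identification or permutation of arguments) on $f_v$, and $\xi$ respects minors by hypothesis; hence $\xi(f_v)$ evaluates consistently, and applying $g_v$ to the tuples associated with a constraint lands in the image of $R$ because $g_v$ is a polymorphism of $\relstr A_2$ and the canonical evaluation tuples lie in $R$. The Combinatorial Gap Theorem of the paper plays no role here; it enters only in the PCSP strengthening discussed in the following sections.
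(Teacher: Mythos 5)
Your proof follows the standard long-code (free-structure) reduction, which is exactly the route taken in the references the paper cites for this theorem ([BKO19,BBKO]); the paper itself does not re-prove Theorem~\ref{thm:CSPred} but relies on those sources, and the same machinery reappears in the paper's own Theorem~\ref{thm:free} and its proof of Theorem~\ref{thm:drhomo}. So your approach matches, and your observation that the Combinatorial Gap Theorem plays no role here is correct. Two small points worth tightening. First, the types in your gadget description are off: a satisfying assignment to a gadget encodes a polymorphism of $\relstr{A}_1$, i.e., a map $A_1^X \to A_1$ (with index set $X$ either $A_2$ for variable clouds or $R$ for constraint clouds), not a map $A_2^k \to A_1$; the coordinate set $A_2$ enters only as the index set of the polymorphism, and this is what makes $\xi(t_v)$, evaluated at the identity map $\id \in A_2^{A_2}$, yield an element of $A_2$ that serves as $\sigma(v)$. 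Second, you should make explicit that the gadget contains, for every relation of $\relstr{A}_1$ and every matrix of column tuples from that relation, a constraint forcing the row-image to lie in the relation again; that is what makes the decoded function a polymorphism rather than an arbitrary map, and it is the step your sketch leaves implicit. With those clarifications the argument is complete and matches the literature's proof.
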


This theorem has a quite simple proof but it is surprisingly powerful: it follows from Bulatov's and Zhuk's complexity classification~\cite{Bul17,Zhu20} that for any $\CSP(\relstr A)$ either
 \begin{itemize}
     \item $\CSP(\relstr A)$ is solvable in polynomial time \item for \emph{every} $\relstr{B}$, \cref{thm:CSPred} provides a reduction from $\CSP(\relstr{B})$ to $\CSP(\relstr{A})$.
 \end{itemize}
In other words, NP-hard CSPs form the largest equivalence class of the preoder given by the reducibility implied by minion homomorphisms. 
In this sense, \cref{thm:CSPred} provides a single source of hardness and, in fact, the proof of this theorem gives a simple reduction from any CSP to any NP-hard CSP (assuming P $\neq$ NP). The theorem is interesting (but substantially weaker) on the algorithmic side as well; for instance, it gives a reduction of any width 1 CSP~\cite{FV98} to HORN-3-SAT (see \cite{BKO19,BBKO}).

\subsection{Reductions between PCSPs} \label{subsec:PCSP}

The fixed template Promise CSP (PCSP) is a recently introduced generalization of the fixed template CSP, motivated by open problems about (in)approximability of SAT and graph coloring~\cite{AGH17,BG16a,BG18}.
The idea is that each constraint has a strict version and a relaxed version and the problem is (in the search version) to find an assignment satisfying the relaxed constraints given an instance which is satisfiable under the strict constraints (this is the promise). More precisely, the template for PCSP is a pair of similar structures $(\relstr A, \relstr B)$, where $\relstr{A}$
specifies the allowed forms of strict constraints and $\relstr B$ their relaxations (we again refer to \cref{sec:PCSPs} for precise definitions). Note that $\PCSP(\relstr A,\relstr A)$ is the same problem as $\CSP(\relstr A)$.

Important examples of PCSPs include graph coloring and hypergraph coloring problems, such as $\PCSP(\relstr K_n, \relstr K_m)$, $m \geq n$ -- the problem to find an $m$-coloring of a given $n$-colorable graph, and approximate versions of satisfiability problems such as the (2+$\varepsilon$)-SAT problem from~\cite{AGH17}. We refer to~\cite{BG16,BG18,BG18b,BKO19,BBKO} for more examples.

The complexity classification of fixed template PCSPs beyond CSPs is largely unknown; indeed, even the complexity of $\PCSP(\relstr K_n, \relstr K_m)$ is a long-standing open problem~\cite{GJ76} and it is known only for some choices of parameters $n$, $m$ (see~\cite{KOWZ} for a recent account). However, an analogue of \cref{thm:jeavons}~\cite{BG16a} and even \cref{thm:CSPred}~\cite{BKO19,BBKO} is available with a natural generalization of polymorphisms. Denoting $\Pol(\relstr{A},\relstr{B})$ the set of all polymorphisms of a template $(\relstr{A},\relstr{B})$, the latter theorem can be formulated as follows. 

\begin{theorem}[\cite{BKO19,BBKO}] \label{thm:PCSPred}
 If $\Pol(\relstr{A}_1,\relstr{B}_1)$ has a minion homomorphism to $\Pol(\relstr{A}_2,\relstr{B}_2)$,  then $\PCSP(\relstr{A}_2,\relstr{B}_2)$ is reducible to $\PCSP(\relstr{A}_1,\relstr{B}_1)$. 
\end{theorem}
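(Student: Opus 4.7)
The plan is to produce a direct polynomial-time reduction. Fix a minion homomorphism $\xi : \Pol(\relstr{A}_1, \relstr{B}_1) \to \Pol(\relstr{A}_2, \relstr{B}_2)$. Given an instance $\inst I$ of $\PCSP(\relstr{A}_2, \relstr{B}_2)$ with variable set $V$, I would construct an instance $\inst I'$ of $\PCSP(\relstr{A}_1, \relstr{B}_1)$ via a ``free structure'' encoding: the variables of $\inst I'$ are $V \times A_2$, so that an assignment $s : V \times A_2 \to A_1$ corresponds, for each $v \in V$, to a function $f_v : A_2 \to A_1$ (a candidate $|A_2|$-ary polymorphism). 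The constraints of $\inst I'$ are then designed to force each $f_v$ to actually lie in $\Pol_{|A_2|}(\relstr{A}_1, \relstr{B}_1)$ and, across constraints of $\inst I$, to be coherent with one another; concretely, for each constraint $R(v_{i_1}, \dots, v_{i_r})$ of $\inst I$, $\inst I'$ contains constraints asserting that, applied coordinatewise over all tuples of $R^{\relstr{A}_2}$, the values of $f_{v_{i_1}}, \dots, f_{v_{i_r}}$ land in $R^{\relstr{A}_1}$ (and in $R^{\relstr{B}_1}$ on the target side).

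For soundness (YES to YES), given an $\relstr{A}_2$-homomorphism $\sigma : V \to A_2$ of $\inst I$, I set $s(v, a) := \pi_{\sigma(v)}(a_1, \dots, a_{|A_2|})$ under any fixed enumeration $A_2 = \{a_1, \dots, a_{|A_2|}\}$, so that $f_v$ is the projection onto coordinate $\sigma(v)$. Since projections are polymorphisms of any template, the resulting $s$ is automatically an $\relstr{A}_1$-homomorphism of $\inst I'$.

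For completeness (NO to NO), from a $\relstr{B}_1$-homomorphism $s' : \inst I' \to \relstr{B}_1$ I extract, for each $v$, a polymorphism $f_v \in \Pol_{|A_2|}(\relstr{A}_1, \relstr{B}_1)$ (ensured by the gadget design), apply $\xi$ to get $g_v := \xi(f_v) \in \Pol_{|A_2|}(\relstr{A}_2, \relstr{B}_2)$, and define $\tau(v) := g_v(a_1, \dots, a_{|A_2|})$, the diagonal evaluation on the enumeration of $A_2$. The minion axioms -- preservation of identification and permutation of arguments -- then ensure that $\tau$ sends each constraint $R(v_{i_1}, \dots, v_{i_r})$ of $\inst I$ into $R^{\relstr{B}_2}$, because the relevant minor identifications correspond precisely to the inter-block coherence enforced by $\inst I'$.

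The main obstacle will be the completeness step, specifically showing that $\inst I'$ is designed precisely enough that its $\relstr{B}_1$-solutions correspond to genuine polymorphism-valued families $(f_v)_{v \in V}$ with the correct inter-constraint coherence. This is the PCSP analog of the Jeavons-style Galois connection between polymorphism minions and invariant relations, and is the algebraic heart of the reduction; once it is in place, minor-preservation by $\xi$ translates directly into the satisfaction of each $\relstr{B}_2$-constraint by the extracted $\tau$, and the remaining verifications are routine bookkeeping.
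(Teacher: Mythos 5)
The paper does not actually prove \cref{thm:PCSPred}; it cites~\cite{BKO19,BBKO} and only sketches the underlying long-code reduction in the comments on \cref{thm:free}. Your proposal has the right high-level shape (long-code encoding followed by pushing through the minion homomorphism $\xi$), but the core construction contains a type error that breaks both directions of the reduction. The cloud of fresh variables attached to an original variable $v$ should be indexed by $A_1^{A_2}$, the set of \emph{functions} $A_2\to A_1$, not by $A_2$ itself. Then an assignment $s':\,V\times A_1^{A_2}\to B_1$ restricts to $f_v:\,A_1^{A_2}\to B_1$, which is exactly the type of an $A_2$-ary candidate polymorphism of $(\relstr A_1,\relstr B_1)$. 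As you have written it, $f_v:A_2\to A_1$ is not a polymorphism of any arity for any template, so the completeness step ``extract $f_v\in\Pol_{|A_2|}(\relstr A_1,\relstr B_1)$ and apply $\xi$'' cannot get off the ground. The same confusion surfaces in soundness: $s(v,a):=\pi_{\sigma(v)}(a_1,\dots,a_{|A_2|})$ is just the constant $a_{\sigma(v)}$, not a projection; the intended formula is $s(v,g):=g(\sigma(v))$ for $g\in A_1^{A_2}$, which indeed makes $f_v$ the $\sigma(v)$-th dictator.

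A second, smaller gap: you only introduce clouds for the variables and then wave at ``constraints asserting that the $f_v$ land in $R^{\relstr A_1}$ coordinatewise.'' The standard reduction (as the paper itself notes, and as in~\cite{BBKO}) also introduces a cloud of $A_1^{R^{\relstr A_2}}$ auxiliary variables for every constraint $R(\bar v)$ of the input, encoding a long code of a member of $R^{\relstr A_2}$, together with polymorphism-consistency constraints on that cloud and merges identifying each $f_{v_{i}}$ with the $\proj_i$-minor of the constraint polymorphism. It is precisely this per-constraint cloud that makes the minor identifications (``inter-block coherence'') available, which $\xi$ then preserves. Dropping it is not routine bookkeeping; without it, it is not clear how to enforce that the extracted family $(f_v)_v$ is coherent across constraints, and the completeness argument has no handle to hang the minor-preservation property of $\xi$ on. So: right idea, but both the indexing of the clouds and the constraint gadget need to be redone before the argument is sound.
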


This theorem is still very useful in the more general PCSP setting. For instance, it gives a reduction of any CSP to $\PCSP(\relstr{K}_3,\relstr{K}_4)$ (as essentially shown in~\cite{BG16}, cf.~\cite{BKO19}) and gives a reduction from $\PCSP(\relstr{NAE}_2^3,\relstr{NAE}_{n}^3)$  (i.e., $n$-coloring a 2-colorable 3-uniform hypergraph), for a certain $n$, to $\PCSP(\relstr{K}_3,\relstr{K}_5)$ as shown in~\cite{BKO19,BBKO}. From NP-hardness of the former problem~\cite{DRS05} one obtains NP-hardness of the latter problem. The algorithmic side of this theorem is discussed in Section 7 of~\cite{BBKO}.

However, \cref{thm:PCSPred} is very much insufficient for proving NP-hardness of every NP-hard PCSP, e.g., one provably cannot apply it to reduce an NP-hard CSP (such as 3-SAT) to $\PCSP(\relstr{NAE}_2^3,\relstr{NAE}_n^3)$.

More widely applicable sufficient conditions for NP-hardness in terms of polymorphisms have been developed in~\cite{BBKO,BWZ20,GS20},  or follow from the results in these papers. They are all based on \cref{thm:pcp} and its refinements. These conditions cover almost all known NP-complete PCSPs, a notable exception being~\cite{Hua13}.

On the other hand, these sufficient conditions are not quite satisfactory for two reasons. First, they are not based on a general reduction theorem such as \cref{thm:PCSPred}, which limits their applicability and appeal. Second, they use complex NP-hardness results (\cref{thm:pcp} and refinements), which, e.g., makes it difficult to reduce a standard NP-complete problem, like 3-SAT, to many NP-hard PCSPs. For instance, if we want to reduce 3-SAT to $\PCSP(\relstr{K}_3,\relstr{K}_5)$ using available theory,  we first need to perform a sequence of reductions used in a proof of the PCP theorem, then another reduction for the Parallel Repetition Theorem, ending up in the situation of \cref{thm:pcp}, then further reductions for an improved version of \cref{thm:pcp} from~\cite{Kho02}, followed  by reductions to approximate hypergraph coloring from~\cite{DRS05}, finally finishing with  reductions provided by \cref{thm:PCSPred} to $\PCSP(\relstr{K}_3,\relstr{K}_5)$~\cite{BKO19}. 
Such a long chain of reductions obscures the reasons why the problem is hard.

\subsection{New reductions between PCSPs} \label{subsec:PCSPnews}

We define a concept of a minion $(d,r)$-homomorphism (\cref{def:drhomo}) that weakens minion homomorphisms in the following sense: for $d=1, r=1$ the concepts coincide, and increasing $d$ or $r$ makes the concept weaker. We then apply the Main Theorem (\cref{thm:main}) to show that a generalization of \cref{thm:PCSPred} remains true with this weaker concept, thus giving us more reductions between PCSPs.

\begin{theorem} \label{thm:PCSPbetterRed}
If $\Pol(\relstr{A}_1,\relstr{B}_1)$ has a minion $(d,r)$-homomorphism to $\Pol(\relstr{A}_2,\relstr{B}_2)$,  then $\PCSP(\relstr{A}_2,\relstr{B}_2)$ is reducible to $\PCSP(\relstr{A}_1,\relstr{B}_1)$. 
\end{theorem}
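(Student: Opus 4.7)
The plan is to follow the template of the proof of~\cref{thm:PCSPred} from~\cite{BBKO} and use~\cref{thm:main} to absorb the extra slack that a $(d,r)$-homomorphism carries compared with a genuine minion homomorphism. Given an instance $\Phi$ of $\PCSP(\relstr{A}_2,\relstr{B}_2)$ with variable set $V$, I would construct an instance $\Phi'$ of $\PCSP(\relstr{A}_1,\relstr{B}_1)$ by essentially the standard BBKO-style gadget: the variables of $\Phi'$ are indexed by data that exposes, inside any $\relstr{B}_1$-solution $s$ of $\Phi'$, an $n$-ary polymorphism of $(\relstr{A}_1,\relstr{B}_1)$ associated to each bounded tuple of variables of $\Phi$, and the constraints of $\Phi'$ mirror those of $\Phi$ together with the minor identifications needed to witness polymorphism-hood. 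Completeness (an $\relstr{A}_2$-solution of $\Phi$ lifts to an $\relstr{A}_1$-solution of $\Phi'$) is verified exactly as in the classical argument, using projections.

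For soundness, suppose $\Phi'$ admits a $\relstr{B}_1$-solution $s$. In the classical proof, for each variable $v \in V$ the solution $s$ encodes a polymorphism $g_v \in \Pol(\relstr{A}_1,\relstr{B}_1)$, and $\xi(g_v) \in \Pol(\relstr{A}_2,\relstr{B}_2)$ evaluated at an appropriate diagonal tuple yields a value in $B_2$. Under the weaker hypothesis that $\xi$ is only a $(d,r)$-homomorphism, the same extraction applied to each subset $U \subseteq V$ of size $k$ produces a polymorphism $g_U$ of $(\relstr{A}_1,\relstr{B}_1)$ whose image $\xi(g_U)$ contains at most $d$ polymorphisms of $(\relstr{A}_2,\relstr{B}_2)$; evaluating these at the relevant coordinate patterns yields a set $\inst{I}(U) \subseteq B_2^U$ of at most $d$ partial assignments, each satisfying every constraint of $\Phi$ whose scope lies inside $U$, because the members of $\xi(g_U)$ are genuine polymorphisms mapping $\relstr{A}_2$-tuples into $\relstr{B}_2$. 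The partial minor-preservation built into $(d,r)$-homomorphisms, applied to $r$-tuples of coordinates shared across different subsets, supplies the limited Carole-style consistency between layers.

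Given the maximum arity $m$ of constraints of $\Phi$ together with the bound $d$, \cref{thm:main} supplies parameters $k_0,k_1 \geq m$. The arity $n$ appearing in the gadget is chosen large enough that, for every pair $U_1 \subseteq U_0 \subseteq V$ with $|U_i| = k_i$, the polymorphisms $g_{U_0}$ and $g_{U_1}$ are realized simultaneously as minors of a single polymorphism extracted from $s$, so that the $r$-wise consistency of $\xi$ translates into the between-layer consistency required by the Main Theorem. Applying the algorithmic form of~\cref{thm:main} to $\inst{I}_0$ and $\inst{I}_1$ then either outputs, in polynomial time, a $\relstr{B}_2$-valued assignment satisfying all constraints of $\Phi$, or it would witness $\val_{k_0,k_1}(\Phi) > d$; since we have just exhibited collections of size at most $d$, the former must occur. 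The main obstacle will be this bookkeeping step: calibrating $n$, $k_0$, $k_1$, and the exact shape of the gadget so that the weak $r$-wise consistency provided by a $(d,r)$-homomorphism is precisely matched to the input format of~\cref{thm:main}, and simultaneously keeping $n$ a constant (depending only on the fixed templates and on $d,r$) so that the reduction remains polynomial-time.
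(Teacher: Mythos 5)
Your skeleton is the right one and matches the paper's: a ``repetition'' step in which variables of the new instance are indexed by bounded subsets $U$ of $V$ with projection constraints between nested subsets, composed with the standard long-code/polymorphism gadget of~\cite{BBKO}; then, from any $\relstr{B}_1$-solution, extract polymorphisms $t_U\in\Pol(\relstr{A}_1,\relstr{B}_1)$, push them through $\xi$ to build a partial assignment system over $B_2$, and conclude via the Main Theorem. The gap is in the number of layers. You request only $k_0,k_1$ from \cref{thm:main}, build a two-layer system $(\inst{I}_0,\inst{I}_1)$, and hope the $(d,r)$-homomorphism supplies consistency between these two layers. It does not, once $r>1$: \cref{def:drhomo} quantifies over chains of minors of length $r+1$, and a chain of only two, $t_{U_0}\minor{\pi}t_{U_1}$, gives you nothing. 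Padding it to length $r+1$ with identity minors is of no help either, because the definition only guarantees \emph{some} $i<j$ with a consistent pair, and a pair with $1\le i<j$ related by an identity minor is trivially consistent and carries no information linking $\inst{I}_0$ to $\inst{I}_1$. So the two-layer system you construct need not be consistent, and the invocation of \cref{thm:main} fails.

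The fix is to match the layer count to $r$. Take $k_0\geq\cdots\geq k_r\geq m$ from \cref{cor:lcgt} with the same $r$ as in the $(d,r)$-homomorphism, let the $i$-th layer of the repetition gadget be indexed by $\binom{V}{k_i}$, and install a projection constraint for \emph{every} nested pair $U\supseteq W$ (not only adjacent layers), so that each chain $U_0\supseteq\cdots\supseteq U_r$ yields a chain of minors $t_{U_0}\minor{\pi_{U_0,U_1}}t_{U_1}\minor{\pi_{U_1,U_2}}\cdots\minor{\pi_{U_{r-1},U_r}}t_{U_r}$ of the right length. Setting $\inst{I}_i(U)=\{q(Z_U):q\in\xi(t_U)\}$, the $(d,r)$-homomorphism condition applied to that chain produces $i<j$ and $g\in\xi(t_{U_i})$, $h\in\xi(t_{U_j})$ with $g\minor{\pi_{U_i,U_j}}h$, which is exactly the weak consistency $(\inst{I}_0,\ldots,\inst{I}_r)$ must satisfy for \cref{cor:lcgt} to apply. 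One further bookkeeping point you gesture at but should make explicit: the natural domains $D_U\subseteq A_2^U$ of the repetition instance grow with $\Phi$, so one must inject each $D_U$ into a fixed finite set $C$ (of size at least $|A_2^{k_0}|$) before handing the instance to the BBKO gadget, otherwise the intermediate template depends on the input and the composed reduction is not well-defined.
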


This theorem partially resolves the shortcomings of the state-of-the-art discussed above. In particular, the theorem gives a reduction of any NP-hard CSP to many known NP-hard PCSPs, including
\begin{itemize}
    \item all NP-hard Boolean symmetric PCSPs, which were classified in~\cite{FKOS19} (e.g., the $(2+\varepsilon)$-SAT from~\cite{AGH17} and, more generally, NP-hard symmetric folded PCSPs classified in~\cite{BG18}),
    \item those NP-hard approximate coloring problems, i.e., PCSPs of the from $\PCSP(\relstr{K}_n,\relstr{K}_m)$, identified in \cite{BKO19} (e.g., $\PCSP(\relstr{K}_3,\relstr{K}_5)$),
    \item all approximate 3-uniform hypergraph coloring problems, i.e., PCSPs over $(\relstr{NAE}_n^3,\relstr{NAE}_m^3)$ \cite{DRS05} (for this we need an improvement of the proof by Wrochna~\cite{wrochna}),
    \item those NP-hard PCSPs identified in~\cite{BWZ20} (promise SAT on non-Boolean domains), in~\cite{KO19} (graph 3-coloring  with strong promises), and in~\cite{BBB21} (variants of 3-uniform hypergraph coloring).
\end{itemize}

The examples of reductions that are not (known to be) covered include NP-hardness proofs 
in~\cite{GS20,ABP20,Hua13} 
and reductions that were used in~\cite{WZ20} to improve~\cite{Hua13}.  These examples suggest directions for improving the Main Theorem and thus \cref{thm:PCSPbetterRed}; we discuss these directions in the Conclusion.

The reduction in \cref{thm:PCSPbetterRed} is very simple and the proof of correctness essentially amounts to applying the Main Theorem, whose proof is itself quite short. We now explain the reduction in some detail.

We fist observe that every $\PCSP(\relstr{A},\relstr{B})$ is equivalent to a certain  $\PCSP(\relstr{A}',\relstr{B}')$, where $\relstr{A}'$ consists of \emph{all} relations up to any fixed sufficiently large arity on any sufficiently large domain.  This is a very simple consequence of ~\cite{BKO19} but still a remarkable observation: we can use any instance as an input to $\PCSP(\relstr{A}',\relstr{B}')$ and thus effectively to $\PCSP(\relstr{A},\relstr{B})$.  In fact, the trivial reduction from $\CSP(\relstr{A}_2,\relstr{B}_2)$ to $\CSP(\relstr{A}_1',\relstr{B}_1')$ is correct in the situation of \cref{thm:PCSPred} (and, again, this was essentially proved in~\cite{BKO19}). Our reduction is just the next most obvious one -- in essence, we introduce a variable for every bounded arity subset of the original variables and include the obvious constraints coming from the requirement that values of variables form partial solutions. 

In summary, the reduction from $\PCSP(\relstr{A_2},\relstr{B_2})$  to $\PCSP(\relstr{A}_1,\relstr{B}_1)$ is a composition of a reduction from  $\PCSP(\relstr{A}_2,\relstr{B}_2)$ to $\PCSP(\relstr{A}_1',\relstr{B}_1')$ (which is the ``repetition'' reduction describe above) and a reduction from the latter PCSP to $\PCSP(\relstr{A}_1,\relstr{B}_1)$ (which is the ``polymorphism'' or ``long code '' reduction). We remark that all the reductions in the PCSP/PCP area, which we are aware of, are variations of these two types of reductions. Is it a coincidence?

Finally, it seems unlikely that \cref{thm:PCSPred} is a single source of hardness for all PCSPs in the same sense as \cref{thm:CSPred} is for CSPs. However, we hope that our result will serve as a useful step toward the goal of obtaining such a theorem, which would give a uniform reduction that completely replaces (and explain) a bit ad hoc intermediate problems and reductions that are still necessary for some PCSPs. Ideally, and this seems much more challenging even for CSPs, the theorem would also fully capture the tractability part. Another exciting direction is toward the more general Promise Valued CSP framework  (see~\cite{AH13,VZ20}), which includes problems such as those in \cref{thm:pcp}. Remarkably,  an analogue of \cref{thm:PCSPred} is already available by an unpublished work of Kazda~\cite{kazda}.

\section{Main Theorem}
    This section is devoted to introducing notation and stating the
    main result of the paper in the full strength.
    First we formalize the notion of the information that is provided by Carole.
    For a set of variables $V$ and domain $A$, 
    a \emph{partial assignment system}~(\emph{PAS})
    of arity $k$~(\emph{$k$-PAS}) is a map from the set of all $k$-element subsets of $V$ such that, for each $ U\in\binom{V}{k}$, we have
    $\emptyset \neq \inst I(U)\subseteq A^U$.
    An assignment $f\in A^V$ is an \emph{$m$-solution} of a $k$-PAS $\inst I$, if every $U\in \binom{V}{m}$ can be extended to $W\in\binom{V}{k}$ while satisfying
    $\proj_U f\in\proj_U \inst I(W)$. 
    The \emph{value} of a PAS $\inst I$
    is the maximal size of $\inst I(U)$.
    
    Let $(\inst I_0,\dotsc,\inst I_r)$ be partial assignment systems over common $V$ and $A$.
    We call such a sequence \emph{consistent} if
    \begin{itemize}
        \item their arities $k_0,\dotsc,k_r$ form a non-increasing sequence, and 
        \item for every $U_0\supseteq \dotsb\supseteq U_r$~(of sizes $k_0,\dotsc, k_r$)
        there exists $i<j$ such that $\inst I_j(U_j)\cap \proj_{U_j}\inst I_i(U_i)\neq\emptyset$.
    \end{itemize}
    The \emph{value} of such a sequence is the maximal among values of $\inst I_i$'s.

   \begin{restatable}[Main Theorem]{theorem}{main} 
    \label{thm:main}
        For any $A$ and any numbers $m,r,d \in \EN$ there exists a sequence $k_0,\dots,k_r$ such that if $(\inst I_0,\dotsc,\inst I_r)$ is a consistent sequence of arities $k_0,\dotsc,k_r$ and value $\leq d$, then some $\inst I_i$ has an $m$-solution. 
        Additionally, for fixed $m, r, d, A$, an $m$-solution can be computed in polynomial time.\footnote{Our procedure is very much non-polynomial with respect to parameters $|A|,m, r$ or $d$.}
    \end{restatable}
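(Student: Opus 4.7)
The plan is to prove the theorem by induction on $r$, choosing the arities $k_r \ll k_{r-1} \ll \cdots \ll k_0$ in reverse order so that each level is large enough to absorb the Ramsey-type blow-ups forced by the levels above it. The base case I would take to be $r = 1$, which is essentially the Combinatorial Gap Theorem: given a consistent pair $(\mathcal{I}_0, \mathcal{I}_1)$ of value $\leq d$ with sufficiently large arities, one of them must admit an $m$-solution. This base case I expect to prove by a direct pigeonhole argument: since values are bounded by $d$ and the alphabet $A$ is finite, there are only finitely many ``local types'' that a PAS can exhibit on a subset up to relabeling, so for $k_0, k_1$ large relative to $m, d, |A|$, one can find a sufficiently uniform sub-configuration from which an $m$-solution is read off.

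For the inductive step, I would color each maximal chain $U_0 \supseteq U_1 \supseteq \cdots \supseteq U_r$ of nested subsets of $V$ by the lexicographically smallest pair $(i, j)$, $i < j$, witnessing the intersection condition on that chain, which exists by the consistency assumption. Because there are only $\binom{r+1}{2}$ colors, a hypergraph Ramsey argument applied level by level (with the arities chosen large enough) yields a nested sub-configuration on which the coloring is constant, equal to some pair $(i^\star, j^\star)$. Restricting to $\mathcal{I}_{i^\star}$ and $\mathcal{I}_{j^\star}$ over this sub-configuration gives a genuinely consistent $2$-level sequence of value $\leq d$, to which the base case applies, producing an $m$-solution of $\mathcal{I}_{i^\star}$ or $\mathcal{I}_{j^\star}$ on the sub-configuration. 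It then remains to lift this local solution to an $m$-solution on all of $V$, which one can hope to do by exploiting the liberal ``some extension'' clause in the definition of an $m$-solution, padding with arbitrary values on coordinates outside the sub-configuration.

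The hard part I anticipate is quantitative: each iterated use of Ramsey forces the arity at the adjacent level to grow very quickly (possibly tower-type), and one must check that $k_0, \ldots, k_r$ can indeed be selected so that all Ramsey inequalities hold simultaneously and the nested sub-configurations line up across levels. A second delicate point is verifying that the lift from the Ramsey sub-configuration back to $V$ really preserves the $m$-solution property, rather than only giving a partial object. For the polynomial-time claim, once $m, r, d, |A|$ are fixed the arities $k_0, \ldots, k_r$ become constants, so the number of $k_i$-subsets of $V$ is polynomial in $|V|$, each $\mathcal{I}_i(U)$ has at most $d$ partial assignments that can be enumerated directly, and the Ramsey-monochromatic extraction and base-case search both reduce to polynomial-size combinatorial searches; the inductive procedure is then algorithmic throughout.
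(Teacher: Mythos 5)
Your approach is genuinely different from the paper's, but it has two concrete gaps that I do not see how to close along the lines you sketch.

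The first gap is the base case $r=1$. You assert that the two-level Combinatorial Gap Theorem follows from ``a direct pigeonhole argument'' via finitely many local types, but that is not a proof, and in fact this is where most of the work lies. The paper does \emph{not} treat $r=1$ as an elementary base case: it proves the full theorem by induction on the \emph{sequence of values} $(d_0,\dotsc,d_r)$, and even the $r=1$ instance with values $(d,d)$ requires descending through the sequences $(d-1,d), (d-2,d), \dotsc$ and, as subsidiary cases, $(d_i,1)$, until reaching $(1,1)$. At each step one argues with two properties $P$ and $Q$ of pairs $(X,f)$: \cref{prop:allowsa} guarantees that some $f\in A^X$ has property $P$ in $\inst I_0$ when $k_0$ is large enough; if $\neg Q$ holds everywhere in some $\inst I_i$ one either reads off an $m$-solution directly (\cref{prop:sol}) or refines $\inst I_i$ into a two-PAS consistent sequence with values $(d_i,1)$ (\cref{prop:to1}); and otherwise one refines the whole sequence and deletes from $\inst I_0$ the assignments extending $f$, strictly dropping $\val(\inst I_0)$. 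The point is that the decrease happens in the \emph{value} $d_0$, with $V$ left intact, rather than in $r$.

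The second, more serious, gap is the lift from the Ramsey sub-configuration $V'$ back to $V$. Suppose the monochromatic extraction works and the base case yields an $m$-solution $g\in A^{V'}$ of $\inst I_{i^\star}\!\restriction_{V'}$. To produce an $m$-solution of $\inst I_{i^\star}$ on $V$ you need a function $f\in A^V$ such that \emph{every} $U\in\binom{V}{m}$ extends to some $W\in\binom{V}{k_{i^\star}}$ with $\proj_U f\in\proj_U\inst I_{i^\star}(W)$. For $U$ that meets $V\setminus V'$ the Ramsey restriction has thrown away all information, so ``padding with arbitrary values'' gives no control over which values of $\proj_U\inst I_{i^\star}(W)$ can be matched; the liberal existential in the definition of an $m$-solution helps with the choice of $W$, not with the values of $f$ on the unseen coordinates. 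The paper sidesteps this entirely by never shrinking $V$: the refinement $\ex_i\colon \binom{V}{p_i}\to\binom{V}{k_i}$ re-indexes each PAS over the same vertex set, so the eventual $m$-solution is already defined on all of $V$.

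Finally, even granting the coloring idea, the Ramsey statement you invoke is for colorings of nested chains $U_0\supseteq\cdots\supseteq U_r$ (not of $k$-subsets), which is not an off-the-shelf hypergraph Ramsey theorem; it would need its own argument. I suggest reading the paper's refinement mechanism carefully: it replaces both your Ramsey shrinkage and your lift by a single ``refine-in-place'' step, and the induction variable $(d_0,\dotsc,d_r)$ rather than $r$ is what makes the recursion terminate.
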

    
  A proof of this theorem is provided in \cref{sec:proofPCP}.    
   
   Let us revisit the Paul/Carole interaction.
   Given an instance $\Phi$, Carole is providing two consistent  PASes
   containing local solutions to $\Phi$.
   Clearly Paul can make Carole's task easier, by asking for longer sequences.
   The Combinatorial Gap Theorem (\cref{thm:cgt}) can be generalized to accommodate such extensions.
   
   For $\Phi$, an $m$-$\CSP$ instance over $V$ and $A$, and $k_0\geq k_1\geq \dotsb\geq k_r \geq m$
   we can put the $\val_{k_0,\dotsc,k_r}(\Phi)$
   to be the \emph{smallest} value of a consistent sequence $(\inst I_0,\dotsc,\inst I_r)$~(over $V$ and $A$) of arities $k_0,\dotsc,k_r$ such that every element of  $\inst I_i(U)$ is a  partial solution to $\Phi$.
   The following strengthening of the Combinatorial Gap Theorem
   follows immediately from \cref{thm:main}.

   \begin{corollary}[Layered Combinatorial Gap Theorem]\label{cor:lcgt}
   For every $A$ and numbers $m, r, d \in \EN$ there exists $k_0\geq\dotsb\geq k_r \geq m$ such that for every instance $\Phi$ of $m$-$\CSP$ over $A$ either
    \begin{itemize}
      \item $\val_{k_0,\dotsc,k_r}(\Phi) = 1$~(i.e. $\Phi$ is solvable) or
      \item $\val_{k_0,\dotsc,k_r}(\Phi) > d$.
\end{itemize}
\end{corollary}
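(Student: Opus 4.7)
The plan is to take the sequence $k_0 \geq \dotsb \geq k_r$ that the Main Theorem provides for the parameters $A, m, r, d$, and argue that these already witness the corollary. Two directions must be checked for a given $m$-CSP instance $\Phi$: that every solvable $\Phi$ has value exactly $1$, and that no $\Phi$ can have value strictly between $1$ and $d$.

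For the first direction, if $\Phi$ has a solution $f \in A^V$, I would simply exhibit the trivial consistent sequence $\inst I_i(U) := \{\proj_U f\}$ for each $U \in \binom{V}{k_i}$. Every $\inst I_i(U)$ is a singleton consisting of a partial solution; the containment-consistency in the definition of consistent sequences is immediate because all partial assignments come from a common global $f$; hence $\val_{k_0,\dotsc,k_r}(\Phi) \leq 1$, and the reverse inequality is built into the definition.

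For the second direction, suppose $\val_{k_0,\dotsc,k_r}(\Phi) \leq d$ and let $(\inst I_0,\dotsc,\inst I_r)$ be a witnessing consistent sequence. The Main Theorem hands us an $m$-solution $f$ of some $\inst I_i$, and the whole task reduces to verifying that $f$ is in fact a solution to $\Phi$. This is routine: for any constraint of $\Phi$ with scope $W$ of size at most $m$, pick $U \in \binom{V}{m}$ with $W \subseteq U$; the $m$-solution condition produces $W' \in \binom{V}{k_i}$ extending $U$ and some $g \in \inst I_i(W')$ agreeing with $f$ on $U$; since $g$ is a partial solution to $\Phi$ and $W \subseteq U \subseteq W'$, the restriction $\proj_W g = \proj_W f$ satisfies the constraint. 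Thus $\Phi$ is solvable, forcing $\val = 1$.

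I do not expect a genuine obstacle: the corollary is explicitly advertised as an immediate consequence of the Main Theorem, and indeed the argument is essentially a dictionary translation. The only step that deserves care is matching up the two notions of ``solution''---a solution to $\Phi$ versus an $m$-solution of a PAS---and this match works precisely because constraints of an $m$-CSP have scope size at most $m$, which is why the Main Theorem was formulated in terms of $m$-solutions in the first place.
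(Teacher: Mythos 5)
Your proof is correct and follows essentially the same route as the paper's: take the $k_i$ from the Main Theorem, use it to produce an $m$-solution of some $\inst I_i$, and observe that because $\Phi$ is an $m$-CSP whose constraints are contained in the extended sets $W'$, the $m$-solution is a genuine solution to $\Phi$. You merely spell out two small verifications (solvable implies value $1$, and the $m$-solution satisfies every constraint of $\Phi$) that the paper compresses into one sentence each.
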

   \begin{proof}
     Let $k_0,\dotsc,k_r$ be the numbers provided by \cref{thm:main} for $A$ and $m,r,d$.
     Let $\Phi$ be an instance such that
     $\val_{k_0,\dotsc,k_r}(\Phi)\leq d$ and let a sequence $\inst I_0,\dotsc,\inst I_r$ provides this value.
     By \cref{thm:main}
     there exists an $m$-solution to a $k_i$-PAS $\inst I_i$ for some $i$.
     Since $\inst I_i$ consists of partial solutions to $\Phi$ and $\Phi$ is an $m$-$\CSP$ instance, the $m$-solution to $\inst I_i$ is in fact a solution to $\Phi$.
     Thus $\Phi$ is solvable and $\val_{k_0,\dotsc,k_r}(\Phi)=1$.
   \end{proof}

\section{Baby Layered PCP Theorem}

In this section we formulate an improvement of \cref{thm:pcp} that was essentially proved in~\cite{DGKR05} and adapted to this form in~\cite{BWZ20}. Then we show that a weaker, combinatorial version of this theorem (which is a stronger version of the Baby PCP Theorem from the introduction) is a straightforward consequence of the Layered Combinatorial Gap Theorem. 

For convenience we define Layered Label Cover in  a somewhat less standard way in that we allow different domains of variables. The difference is inessential.  

An \emph{$r$-Layered Label Cover} instance consists of 
\begin{itemize}
    \item a set $X$ of \emph{variables}, which is a disjoint union of sets $X_0$, \dots, $X_r$ (called \emph{layers}),
    \item a set $A_x$ for each $x \in X$, called the \emph{domain} of $x$,
    \item a set of \emph{constraints} of the form $((x,y),\psi)$, where $x \in X_i$ and $y \in X_j$ for some $i<j$, and $\psi$ is a map $A_x \to A_y$. We refer to such a constraint as a constraint from $x$ to $y$ and require that there is at most one constraint from $x$ to $y$ for any pair of variables $x$, $y$.  
\end{itemize}
An \emph{assignment} for such an instance is a mapping $f$ with domain $X$ such that $f(x) \in A_x$ for every $x \in X$. It \emph{satisfies} a constraint $((x,y),\psi)$ if $\psi(f(x))=f(y)$. 
A \emph{chain} is a sequence of variables $(x_0, \dots, x_r)$, $x_i \in X_i$, such that there is a constraint from $x_i$ to $x_j$ for each $i<j$. It is \emph{weakly satisfied} by an assignment $f$ if $f$ satisfies at least one of the constraints from $x_i$ to $x_j$, $i < j$.
Finally, the \emph{layered value} of an instance is the largest $\varepsilon$ such that there exists an assignment that weakly satisfies at least $\varepsilon$ fraction of all chains. 

\begin{theorem}[\cite{BWZ20}, Layered PCP Theorem] \label{thm:LPCP}
For every $r \in \EN$ and  $\varepsilon > 0$ there exists  $N \in \EN$ such that,
in the set of instances of $r$-Layered Label Cover with domain sizes at most $N$, 
it is NP-hard to distinguish solvable ones
from those whose layered value is smaller than~$\varepsilon$.
\end{theorem}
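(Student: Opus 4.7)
The plan is to derive \cref{thm:LPCP} from the classical PCP Theorem combined with parallel repetition, following the pipeline of \cite{DGKR05} and its repackaging into the present layered form by \cite{BWZ20}. Because the theorem is attributed rather than original, I would invoke those works as black boxes; what follows is a sketch of the route they execute.

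First, \cref{thm:pcp} yields, for any $\varepsilon' > 0$, NP-hardness of the gap $2$-$\CSP$ over a suitable alphabet $B$, which is standardly recast as a bipartite \emph{projection game}---exactly a $2$-Layered Label Cover instance with the projection property, handling the case $r = 1$. For $r \geq 2$, I would take the $r$-fold tuple construction: layer $X_i$ consists of $i$-tuples of variables of the base game with domain $B^i$; the constraint from an $i$-tuple to a $j$-tuple with $i < j$ requires agreement on the shared coordinates and the base projection on the remaining ones. Completeness is immediate, since a satisfying assignment of the base game lifts coordinatewise to every layer.

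The hard part will be the soundness analysis. One must argue that for any assignment, only an $\varepsilon$ fraction of chains are weakly satisfied. The standard route is to convert a weakly satisfied chain into an \emph{agreement} between the labellings of two layers, and then to apply the Parallel Repetition Theorem~\cite{Raz98} to bound the probability of such an agreement by an exponentially small function of the tuple length; the smoothness construction of \cite{DGKR05} is what prevents an adversarial assignment from concentrating on a few favourable coordinates and is the technical heart of the argument. The resulting domain bound $N$ is then controlled by $|B|^{O(r)}$, with further dependence on $\varepsilon$ coming through the number of parallel repetitions. I would not attempt to reprove Raz's theorem or DGKR-smoothness, so the entire step rests on the cited machinery; the contribution of the present section is instead to observe (via \cref{cor:lcgt}) that a combinatorial weakening of the statement is available without any PCP input at all.
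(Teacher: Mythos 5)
The paper itself gives no proof of \cref{thm:LPCP}: it is stated as a black-box citation to \cite{BWZ20} (which in turn repackages \cite{DGKR05}), and nothing in the paper re-derives it. You correctly identify this, and your overall pipeline (PCP $\Rightarrow$ gap Label Cover via parallel repetition, then the multilayered DGKR construction for $r \geq 2$) is the standard route the cited papers follow; you also correctly place the paper's actual contribution in \cref{thm:babyLPCP} and \cref{cor:lcgt}, which sidestep the PCP machinery entirely.

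One imprecision worth flagging in your sketch of the cited construction: the DGKR multilayered instance is not built from ``$i$-tuples of variables of the base game with domain $B^i$.'' The base game in \cite{DGKR05} is a \emph{bipartite} projection game with left vertices $U$ and right vertices $W$, and layer $i$ (for $0 \le i \le \ell$) consists of mixed tuples in $U^{\ell-i}\times W^{i}$; a tuple in layer $j$ is obtained from one in layer $i<j$ by replacing $j-i$ of the remaining $U$-coordinates with adjacent $W$-vertices, with the constraint enforcing equality on unchanged coordinates and the base projection on the replaced ones. This alternation between the two sides of the bipartition is what produces the weak-density (``smoothness'') property that soundness rests on; a construction using only tuples of one kind of variable with identity-consistency on the overlap would not give it. Since you explicitly defer to \cite{DGKR05,BWZ20} as black boxes rather than reproving them, this is a descriptive inaccuracy rather than a fatal gap, but if the sketch were expanded into a real proof that detail would need to be corrected.
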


A combinatorial adaption of layered value goes as follows.
For each variable we allow $d$-choices of values; formally
a \emph{$d$-assignment} for an $r$-Layered Label Cover instance is a mapping with domain $X$ such that each $f(x)$ is a subset of $A_x$ of size at most $d$.
Then we generalize the notion of weak satisfiability in the most natural way: 
a chain $(x_0,\dotsc,x_r)$ is \emph{weakly satisfied} by a $d$-assignment  
$f$ if for some $i<j$ the constraint $((x_i,x_j),\psi)$ is such that $\psi(f(x_i))\cap f(x_j)\neq\emptyset$.
Finally, the \emph{combinatorial layered value} of an instance is the smallest $d$ such that there exists a $d$-assignment that weakly satisfies all the chains.

\begin{theorem}[Baby Layered PCP Theorem] \label{thm:babyLPCP}
For every $r \in \EN$ and  $\varepsilon > 0$ there exists  $N \in \EN$ such that,
in the set of instances of $r$-Layered Label Cover with domain sizes at most $N$,
it is NP-hard to distinguish solvable ones
from those whose combinatorial layered value is greater than $d$.
\end{theorem}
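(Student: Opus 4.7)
The strategy is to reduce an NP-hard $m$-$\CSP$ (e.g., $3$-SAT, viewed as $3$-$\CSP$ over $\{0,1\}$) to $r$-Layered Label Cover, letting the Layered Combinatorial Gap Theorem (\cref{cor:lcgt}) deliver the soundness gap. Fix such an NP-hard template with domain $A$ and arity $m$; for the given parameters $r$ and $d$, invoke \cref{cor:lcgt} to obtain $k_0 \geq \dotsb \geq k_r \geq m$ and set $N := |A|^{k_0}$.

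Given an instance $\Phi$ of the $m$-$\CSP$ on variables $V$, build an $r$-Layered Label Cover instance as follows. The variables of layer $i$ are the $k_i$-subsets of $V$, i.e., $X_i := \binom{V}{k_i}$ (formally disjointified across layers if some $k_i$ happen to coincide). The domain of $U \in X_i$ is $A_U := \{g \in A^U : g \text{ satisfies every constraint of } \Phi \text{ whose scope lies in } U\}$. For every $i<j$ and every $U \in X_i$, $U' \in X_j$ with $U' \subseteq U$, include the single constraint from $U$ to $U'$ given by the restriction map $g \mapsto g|_{U'}$. Since the $k_i$ are fixed constants, the number of variables and constraints is polynomial in $|V|$ and all domain sizes are bounded by $N$. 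Whether $A_U$ is empty can be decided in constant time per $U$ by brute force; if any $A_U$ turns out to be empty, $\Phi$ is already unsolvable and we output a fixed NO instance, so henceforth we may assume every $A_U$ is nonempty.

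Completeness is immediate: a solution $s : V \to A$ of $\Phi$ yields the assignment $U \mapsto s|_U$, which satisfies every projection constraint. For soundness, suppose the produced instance has combinatorial layered value at most $d$, witnessed by a $d$-assignment $f$. Enlarging any empty $f(U)$ to a singleton from the nonempty $A_U$ does not affect weak satisfaction, so assume $f(U) \neq \emptyset$ for every $U$. Define $\inst I_i(U) := f(U)$ for $U \in \binom{V}{k_i}$. Each $\inst I_i$ is a $k_i$-PAS of value at most $d$ whose elements are partial solutions to $\Phi$. For any descending chain $U_0 \supseteq \dotsb \supseteq U_r$ of sizes $k_0,\dotsc,k_r$, weak satisfaction gives some $i<j$ with $\proj_{U_j} \inst I_i(U_i) \cap \inst I_j(U_j) \neq \emptyset$, which is precisely the consistency requirement of \cref{thm:main}. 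Therefore $\val_{k_0,\dotsc,k_r}(\Phi) \leq d$, and \cref{cor:lcgt} forces this value to equal $1$, i.e., $\Phi$ is solvable.

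No substantive obstacle is anticipated: the entire argument rests on the verbatim correspondence between weak satisfaction of chains and consistency of the associated PAS sequence, after which \cref{cor:lcgt} supplies the gap for free. Minor bookkeeping around empty $A_U$'s and empty $f(U)$'s is handled by cheap polynomial-time preprocessing, and the choice of NP-hard source $m$-$\CSP$ can be arbitrary.
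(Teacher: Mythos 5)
Your proof follows the same approach as the paper's: reduce from an NP-hard $m$-CSP by taking layer $i$ variables to be $\binom{V}{k_i}$ with domains consisting of partial solutions, and using restriction maps as the label-cover constraints; completeness is then a direct translation of weak satisfaction into consistency of PASes, and \cref{cor:lcgt} supplies the gap. The only difference is that you explicitly handle two degenerate cases that the paper glosses over: (a) some $A_U$ could be empty (in which case the source instance is visibly unsolvable and you output a canonical NO instance), and (b) the definition of a $d$-assignment permits $f(U) = \emptyset$, which would violate the PAS requirement $\inst I_i(U) \neq \emptyset$, so you enlarge such sets to singletons — neither enlargement can break weak satisfaction nor consistency. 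Both are cheap, correct fixes to minor omissions in the published argument, but they do not constitute a different proof strategy.
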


\begin{proof}
   We fix $r,d$ and reduce from the $m$-CSP over $A$ for any fixed $m$ and $A$, which is enough since, e.g., $3$-CSP over $\{0,1\}$ is NP-hard. 
   
   Let $k_0, \dots, k_r$ be the numbers provided by~\cref{cor:lcgt} and 
   let $\Phi$ be an instance of $m$-CSP over $A$ with a set of variables $V$. We define an instance $\Psi$ of $r$-Layered Label Cover as follows.
   The $i$-th layer variable set is defined as $X_i = \binom{V}{k_i}$ and the domain $A_U$ of $U \in X_i$ is defined as the set of all partial solutions $U \to A$ of $\Phi$ (i.e., a variable $U \in X_i$ is a $k_i$-element set of the original variables and its domain is a subset of $A^U$). For each $U\supseteq W$ we include a constraint $((U,W),\psi)$ with $\psi: A_U \to A_W$ defined by $\psi(g) = \proj_W(g)$. Notice that the definition of $\psi$ makes sense since a restriction of a partial solution to $\Phi$ is a partial solution to $\Phi$. This finishes the construction.
   
   Soundness of this reduction is immediate: if $h: V \to A$ is a solution to $\Phi$, then $f(U) = \proj_U h$ defines a solution to $\Psi$. To prove completeness, assume that $f$ is a $d$-assignment for $\Psi$ that weakly satisfies all the chains. For $0 \leq i \leq r$ and $U \in X_i$ define $\inst I_i(U) = f(U)$ and note that, by construction of $\Psi$, $\inst I_i$ is a $k_i$-PAS for $\Phi$ and, since $f$ is a $d$-assignment that weakly satisfies chains, the sequence $(\inst{I}_0, \dots, \inst{I}_r)$ is consistent. Therefore $\val_{k_0, \dots, k_r}(\Phi) \leq d$ and the Layered Combinatorial Gap Theorem (\cref{cor:lcgt}) finishes the proof by showing that  $\Phi$ is solvable.
   \end{proof}

\section{Promise Constraint Satisfaction Problems}
\label{sec:PCSPs}

In this section we formally define fixed template PCSPs, their polymorphism minions, and minion homomorphisms -- the concepts that are necessary to fully understand the statement of \cref{thm:PCSPred}.

We start by defining homomorphisms between relational structures. We will only work with finite relational structures of finite signature, therefore we can use the formalism from the introduction, that is, a \emph{relational structure} $\relstr{A}$ is a tuple $\relstr A =(A;R_1,\dotsc,R_l)$, where $A$ is a finite \emph{domain} and $R_i \subseteq A^{\ar i}$ is a nonempty relation of \emph{arity} $\ar i$. Two structures are \emph{similar} if they have the same number of relations and corresponding relations have the same arity. For two similar structures $\relstr{A} = (A; R_1, \dotsc R_l)$ and $\relstr{B} = (B; S_1, \dotsc, S_l)$, a \emph{homomorphism} for $\relstr{A}$ to $\relstr{B}$ is a map $h: A \to B$ that preserves the relations, i.e., for any $i$ and any tuple $\tuple{a} \in R_i$, the tuple $h(\tuple{a})$, obtained by component-wise application of $h$, is in $S_i$.

A \emph{CSP template} is a relational structure. The CSP over $\relstr{A}$ is defined by allowing only the CSP instances over $A$ such that each constraint is, in essence, one of the $R_j$. Formally, each constraint $\varphi \subseteq A^W$ is equal to
 $\{f \in A^W: (f(w_1),f(w_2), \dots, f(w_{\ar j})) \in R_j\}$, where $1 \leq j \leq l$ and $W = \{w_1, \dots, w_{\ar j}\}$. 
For notation's sake, we identify\footnote{Note that neither the sequence $(w_1,\dotsc,w_{\ar_j})$ nor the relation $R_j$ needs to be uniquely determined by $\varphi_i$. On the other hand $\varphi_i$ is determined by $(w_1,\dotsc,w_{\ar_j})$ and $R_j$.}
$\varphi$ with the pair $((w_1,\dotsc,w_{\ar_j}),R_j)$.

\subsection{Promise CSPs}
A \emph{PCSP template} is a pair $(\relstr{A},\relstr{B})$ of similar relational structures such that there exists a homomorphism from $\relstr{A}$ to $\relstr{B}$. 
Denoting $\relstr{A}=(A; R_1, \ldots, R_l)$ and $\relstr{B}=(B; S_1, \ldots, S_l)$, 
the PCSP over such a template is defined as follows.

\Psearch{Promise CSP: $\PCSP(\relstr A,\relstr B)$}
{a set of formal constraints of the form $((w_1,\dotsc,w_{\ar_j}),R_j/S_j)$}
{instance with constraints $((w_1,\dotsc,w_{\ar_j}),R_j)$ is solvable}
{find a solution to the instance with constraints $((w_1,\dotsc,w_{\ar_j}),S_j)$}

Given an instance $\Phi$ of $\PCSP(\relstr A,\relstr B)$, the instance of $\CSP(\relstr A)$ appearing in the promise is denoted $\Phi^{\relstr{A}}$ and referred to as the \emph{strict version of $\Phi$}. Similarly, the instance of $\CSP(\relstr B)$ in the goal is the \emph{relaxed version of $\Phi$}, denoted $\Phi^{\relstr{B}}$.

The existence of a homomorphism $h: \relstr{A} \to \relstr{B}$ is sufficient (and necessary) to guarantee that $\PCSP(\relstr{A},\relstr{B})$ makes sense: if the promise is fulfilled, i.e., $\Phi^{\relstr{A}}$ has a solution $f: V \to A$, then the goal can be reached, i.e., $\Phi^{\relstr{B}}$ has a solution, namely $hf$.

We have defined the fixed template PCSP in its \emph{search version}.
In the \emph{decision version of $\PCSP(\relstr{A},\relstr{B})$}, the task is to distinguish instances $\Phi$ solvable in $\relstr{A}$ (i.e., $\Phi^{\relstr{A}}$ is solvable) from those that are not even solvable in $\relstr{B}$. We present our reductions for the official, search version of the problem, which clearly gives us reductions for the decision version as well.

\subsection{Polymorphism minions}

Let $t:A^n\rightarrow B$ and $\pi:[n]\rightarrow [m]$. We say that $s:A^m\rightarrow B$ is a \emph{minor}~(or $\pi$-minor, if $\pi$ matters) of $t$ and write $t\minor{\pi} s$ if $s(a_1,\dotsc,a_m)=t(a_{\pi(1)},\dotsc,a_{\pi(n)})$ for any $(a_1, \dotsc, a_m) \in A^m$.

\begin{definition}[minion]
A \emph{minion} $\minion M$ on a pair of sets $(A,B)$ is a subset of $\bigcup_{i \geq 1} B^{A^i}$ such that 
\begin{itemize}
    \item $\minion M\neq\emptyset$, and
    \item if $t\in\minion M$ and $t\minor{\pi} s$ for some suitable $\pi$, then $s\in\minion M$.
\end{itemize}
\end{definition}

An $n$-ary \emph{polymorphism} of a PCSP template $(\relstr A, \relstr B)$ is a map $t: A^n \to B$ such that for any relation $R_i$ of $\relstr{A}$ and any $[\ar i] \times [n]$ matrix whose columns are in $R_i$, the tuple obtained by applying $t$ to the rows is in the corresponding relation $S_i$ of $\relstr{B}$. 
The set of all polymorphims of a template $(\relstr A,\relstr B)$ is denoted by $\Pol(\relstr A,\relstr B)$.
It is easy to observe~(cf. \cite{BBKO}) that $\Pol(\relstr A,\relstr B)$ is a minion.

The final concept required for \cref{thm:PCSPred} is minion homomorphism.

\begin{definition}[minion homomorphism; Definition 2.21~\cite{BBKO}]
    Let $\minion M, \minion N$ be two minions~(not necesarilly on the same pairs of sets). 
    A mapping $\xi:\minion M\rightarrow\minion N$ is called a \emph{minion homomorphism} if
    \begin{enumerate}
        \item it preserves arities, i.e., arity of $\xi(t)$ is equal to arity of $t$ for all $t\in \minion M$, and
        \item it preserves taking minors i.e. if $t\minor{\pi} s$ then
        $\xi(t)\minor{\pi}\xi(s)$.
    \end{enumerate}
\end{definition}

We are ready to formally state \cref{thm:PCSPred}.

\begin{theorem}[Theorem 3.1~\cite{BBKO}]\label{thm:oldred} 
    Let $(\relstr A_1,\relstr B_1)$ and $(\relstr A_2,\relstr B_2)$ be two  PCSP templates, and let $\minion M_i=
    \Pol(\relstr A_i,\relstr B_i)$ for $i=1,2$.
    If there exists a minion homomorphism $\xi:\minion M_1\rightarrow\minion M_2$
    then $\PCSP(\relstr A_2,\relstr B_2)$ is log-space reducible to $\PCSP(\relstr A_1,\relstr B_1)$.
\end{theorem}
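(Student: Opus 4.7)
The plan is to give an explicit log-space reduction $\Phi \mapsto \Psi$ from $\PCSP(\relstr A_2, \relstr B_2)$ to $\PCSP(\relstr A_1, \relstr B_1)$ that uses the minion homomorphism $\xi$ only in the soundness argument. The reduction is the standard ``relational gadget'' (or ``free structure'') construction. Given $\Phi$ with variable set $V$ and constraints of the form $((w^p_1, \ldots, w^p_{k_p}), R_{j_p}/S_{j_p})$, I would introduce in $\Psi$ a polynomial-size family of variables indexed by elements of $V$ together with data drawn from the tuples of each $R_{j_p} \subseteq A_2^{k_p}$. The constraints of $\Psi$, drawn from the template $(\relstr A_1, \relstr B_1)$, are arranged so that any satisfying assignment of $\Psi$ encodes, for each $v \in V$, a function $t_v : A_1^{n_v} \to B_1$ which the constraints force to be a polymorphism in $\minion M_1 = \Pol(\relstr A_1, \relstr B_1)$; moreover, any two variables of $\Phi$ appearing in a common constraint induce, through the structure of $\Psi$, prescribed minor identities among the corresponding $t_v$'s.

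For completeness, I would take a solution $h: V \to A_2$ of $\Phi^{\relstr A_2}$ and label each variable of $\Psi$ using the appropriate ``dictator'' (projection) polymorphism evaluated on the indexing data derived from $h$. Because dictators belong to every polymorphism minion, all constraints of $\Psi^{\relstr A_1}$ are satisfied by this choice; this direction is a routine check.

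For soundness, I take a solution $s$ of $\Psi^{\relstr B_1}$, extract for each $v \in V$ the polymorphism $t_v \in \minion M_1$ that $s$ encodes, apply $\xi$ to obtain $\xi(t_v) \in \minion M_2 = \Pol(\relstr A_2, \relstr B_2)$, and evaluate $\xi(t_v)$ on a distinguished tuple to define $g(v) \in B_2$. The main obstacle is to verify that $g$ satisfies every constraint $((w^p_1, \ldots, w^p_{k_p}), R_{j_p}/S_{j_p})$ of $\Phi^{\relstr B_2}$. The argument proceeds by listing $R_{j_p}$ as the columns of a $k_p \times |R_{j_p}|$ matrix with rows $\tuple r_1, \ldots, \tuple r_{k_p} \in A_2^{|R_{j_p}|}$; after a reindexing, $g(w^p_i)$ agrees with $\xi(t_{w^p_i})(\tuple r_i)$, and the tuple $(g(w^p_1), \ldots, g(w^p_{k_p}))$ is then forced into $S_{j_p}$ because (i) the constraints of $\Psi$ impose specific minor identities on the $t_{w^p_i}$'s, (ii) $\xi$ preserves these minor identities, and (iii) the defining property of a polymorphism of $(\relstr A_2, \relstr B_2)$, applied to the columns $\tuple a_1, \ldots, \tuple a_{|R_{j_p}|}$ of the matrix, yields membership of the output tuple in $S_{j_p}$. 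Making the extraction of the $t_v$'s and the commutation of ``take minor'' with ``apply $\xi$'' fully precise is the technical heart of the proof; once this is in place, the log-space bound on the reduction is immediate from the polynomial size and local structure of $\Psi$.
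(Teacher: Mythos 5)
The paper does not actually prove \cref{thm:oldred}; it cites it as Theorem~3.1 of~\cite{BBKO} and uses it as a black box (the ``comments on the proof'' of \cref{thm:free} even invoke \cref{thm:oldred} in the other direction). Your sketch is the standard long-code reduction, which is indeed the essence of the BBKO proof, so you are on the right track. There is a presentational difference worth flagging: BBKO (and the present paper's framing via \cref{thm:free}) factor the reduction cleanly into two stages, first the observation that a minion homomorphism $\xi:\minion M_1\to\minion M_2$ yields a structure homomorphism from the free structure $\relstr F^{}_{\minion M_1}(\relstr A_2)$ to $\relstr B_2$, hence a \emph{trivial} reduction from $\PCSP(\relstr A_2,\relstr B_2)$ to $\PCSP(\relstr A_2, \relstr F^{}_{\minion M_1}(\relstr A_2))$, and second the long-code gadget reducing the latter to $\PCSP(\relstr A_1,\relstr B_1)$, which makes no reference to $\xi$ at all. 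You fuse both into one step, which is legitimate but mixes the two concerns; the factored version makes it clearer that the gadget is generic and $\xi$ is only needed once, to pass from $\relstr F^{}_{\minion M_1}(\relstr A_2)$ to $\relstr B_2$.

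One technical slip in your soundness paragraph: writing $g(w^p_i) = \xi(t_{w^p_i})(\tuple r_i)$ with $\tuple r_i\in A_2^{|R_{j_p}|}$ does not typecheck, since $t_{w^p_i}$ (and hence $\xi(t_{w^p_i})$) should have arity $A_2$, not arity $|R_{j_p}|$; the arity of $t_v$ must be intrinsic to $v$, not to a particular constraint containing it. The correct bookkeeping extracts a polymorphism $t_v$ (arity $A_2$) from each variable cloud \emph{and} a polymorphism $u_p$ (arity $R_{j_p}$) from each constraint cloud; the merging constraints of $\Psi$ force $u_p\minor{\proj_i}t_{w^p_i}$ where $\proj_i:R_{j_p}\to A_2$ is the $i$-th coordinate projection. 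Setting $g(v)=\xi(t_v)(\id_{A_2})$, one has $g(w^p_i)=\xi(u_p)(\proj_i)$ by minor-preservation, and the polymorphism property of $\xi(u_p)\in\Pol(\relstr A_2,\relstr B_2)$ applied to the $[k_p]\times R_{j_p}$ matrix of tuples of $R_{j_p}$ gives $(g(w^p_1),\dotsc,g(w^p_{k_p}))\in S_{j_p}$. Once you fix the arities in this way the rest of your sketch goes through.
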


\section{New reduction for PCSPs}
    In this section we formally state our main application of the Layered Combinatorial Gap Theorem, \cref{thm:PCSPbetterRed} and mention some consequences.
    
    A \emph{chain of minors}, which is a useful notion we borrow from~\cite{BWZ20}, is a sequence of minors 
    $t_0\minor{\pi_{0,1}}t_1\minor{\pi_{1,2}}t_2\minor{\pi_{2,3}}\dotsb\minor{\pi_{r-1,r}}t_r$. For such a sequence and $i<j$ we denote by $\pi_{i,j}$ the composition $\pi_{j-1,j}\circ\dotsb\circ\pi_{i,i+1}$; observe that $t_i\minor{\pi_{i,j}}t_j$. 
    The new concept of $(d,r)$-minion homomorphism is defined by requiring a weak form of preservation of chains as follows.

    \begin{definition}[$(d,r)$-minon homomorphism] \label{def:drhomo}
    Let $\minion M, \minion N$ be two minions and $d,r \in \EN$.
A mapping $\xi$ from $\minion M$ to the set of all at most $d$-element subsets of $\minion N$ is called a \emph{minion $(d,r)$-homomorphism} if
    \begin{enumerate}
        \item it preserves arities, i.e., every $g \in \xi(t)$ has the same arity as $t$; and
        \item for any chain of minors 
        $t_0\minor{\pi_{0,1}}t_1\minor{\pi_{1,2}}t_2\minor{\pi_{2,3}}\dotsb\minor{\pi_{r-1,r}}t_r$ there
        \begin{equation*}
            \text{exists } i< j \text{ and } g\in\xi(t_i)\ h\in\xi(t_j) \text{ satisfying } g\minor{\pi_{i,j}} h. 
        \end{equation*}

    \end{enumerate}
    \end{definition}

     Notice than minion $(1,1)$-homomorphism is essentially the same as minion homomorphism and that the concept of $(d,r)$-homomorphism gets weaker as $d$ or $r$ increase. We also remark that $(d,r)$-homomorphisms can be composed with  $(1,1)$-homomorphisms from either side\footnote{It is, however, unclear to us whether the composition of two $(d,r)$-homomorphisms (for some $d,r$) is a $(d',r')$-homomorphism.}.

    The following formal statement of \cref{thm:PCSPbetterRed} is obtained by replacing minion homomorphisms in \cref{thm:oldred} by this weaker concept. 
    A proof is in \cref{sec:proofRed}.

\begin{restatable}{theorem}{drHomoReduction} \label{thm:drhomo}
Let $(\relstr A_1,\relstr B_1)$ and $(\relstr A_2,\relstr B_2)$ be two  PCSP templates, and let $\minion M_i=
    \Pol(\relstr A_i,\relstr B_i)$ for $i=1,2$.
    If there is a minion $(d,r)$-homomorphism $\xi:\minion M_1\rightarrow\minion M_2$~(for some $d$ and $r$)
    then $\PCSP(\relstr A_2,\relstr B_2)$ is P-time reducible\footnote{We believe that the reduction can be done in log space, but do not include the details here.}
    to $\PCSP(\relstr A_1,\relstr B_1)$.
\end{restatable}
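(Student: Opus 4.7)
My plan is to generalize the long-code/polymorphism reduction behind \cref{thm:oldred} by adding a layered ``repetition'' component whose depth $r$ and arities $k_0,\dotsc,k_r$ are supplied by the Main Theorem. Let $m$ be the maximum arity of any relation in $(\relstr{A}_2,\relstr{B}_2)$ and apply \cref{thm:main} with $A=B_2$ together with $m,r,d$ to obtain $k_0\geq\dotsb\geq k_r\geq m$. Given an instance $\Phi_2$ of $\PCSP(\relstr{A}_2,\relstr{B}_2)$ on variable set $V$, the reduced instance $\Phi_1$ of $\PCSP(\relstr{A}_1,\relstr{B}_1)$ has one long-code style block of variables for each $U\in\bigcup_i\binom{V}{k_i}$: the block of $U$ consists of $|A_1|^{|U|}$ variables, one per tuple in $A_1^U$, so that any assignment of the block is a function $A_1^U\to B_1$ (or $\to A_1$ in the strict version). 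Intra-block constraints are the standard ones forcing this function to lie in $\minion{M}_1=\Pol(\relstr{A}_1,\relstr{B}_1)$ and to respect the parts of $\Phi_2^{\relstr{A}_2}$ supported on $U$; extra inter-block constraints enforce that for $U_0\supseteq U_1$ the block of $U_1$ is the $\pi_{U_0,U_1}$-minor of the block of $U_0$, where $\pi_{U_0,U_1}\colon[|U_0|]\to[|U_1|]$ sends each coordinate in $U_1$ to itself and folds the remaining coordinates onto a fixed chosen one. The resulting $\Phi_1$ has size $O(|V|^{k_0})$, hence polynomial in $|V|$.

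Completeness is routine: any solution $s\colon V\to A_2$ to $\Phi_2^{\relstr{A}_2}$ induces a solution to $\Phi_1^{\relstr{A}_1}$ by filling in each block with the ``projection-to-$U$'' pattern; the projection polymorphisms lie in $\minion{M}_1$ so all intra-block constraints are met, and the inter-block projection constraints hold by construction.

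The heart of the proof is soundness, which is where $\xi$ and \cref{thm:main} enter. Given a solution $f$ to $\Phi_1^{\relstr{B}_1}$, the restriction of $f$ to the block of $U\in\binom{V}{k_i}$ is a polymorphism $t_U\in\minion{M}_1$ of arity $|U|=k_i$. I then define $\inst{I}_i(U)\subseteq B_2^U$ as the set of tuples obtained by applying the at most $d$ polymorphisms in $\xi(t_U)\subseteq\minion{M}_2$ to a fixed generic tuple from $A_2^U$, chosen once and for all so that its projections behave coherently. By construction $|\inst{I}_i(U)|\leq d$, and every element of $\inst{I}_i(U)$ is a partial solution of $\Phi_2^{\relstr{B}_2}$ on $U$ because members of $\minion{M}_2$ send relations of $\relstr{A}_2$ to the corresponding relations of $\relstr{B}_2$. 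The inter-block constraints of $\Phi_1$ guarantee that along any chain $U_0\supseteq\dotsb\supseteq U_r$ the polymorphisms $t_{U_0},\dotsc,t_{U_r}$ form a chain of minors in $\minion{M}_1$ under $\pi_{U_i,U_{i+1}}$; \cref{def:drhomo} then supplies $i<j$ with $g\in\xi(t_{U_i})$, $h\in\xi(t_{U_j})$ satisfying $g\minor{\pi_{i,j}} h$, which translates into an element of $\inst{I}_i(U_i)$ whose projection onto $U_j$ lies in $\inst{I}_j(U_j)$. Hence $(\inst{I}_0,\dotsc,\inst{I}_r)$ is a consistent sequence of value at most $d$, and \cref{thm:main} returns in polynomial time an $m$-solution $\hat f\colon V\to B_2$ of some $\inst{I}_i$; since $m$ bounds the arity of every relation in $\relstr{B}_2$, $\hat f$ is the desired solution of $\Phi_2^{\relstr{B}_2}$.

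The main obstacle I anticipate is choreographing the long-code coordinates so that the two kinds of projections agree: the block-to-block projection constraints of $\Phi_1$ must induce exactly the minor relation in $\minion{M}_1$ that the $\pi_{i,j}$ from \cref{def:drhomo} expect, and the ``fixed generic tuple'' used to evaluate $\xi(t_U)$ must be chosen compatibly with projection from $U_i$ to $U_j$ so that the minor operation in $\minion{M}_2$ corresponds to restricting tuples in $B_2^{U_i}$ to $B_2^{U_j}$. Once this bookkeeping is settled, all the genuinely combinatorial work is done by \cref{thm:main}, and the polynomial-time clause in its statement immediately yields the polynomial-time reduction claimed in the theorem.
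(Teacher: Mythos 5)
Your high-level architecture agrees with the paper's: encode the instance into long-code blocks indexed by the sets $U\in\binom{V}{k_i}$ with projection constraints between nested sets, decode a solution into a sequence of partial assignment systems via $\xi$, and conclude by \cref{thm:main}. But the details you supply for the long-code blocks are wrong in a way that is not merely ``bookkeeping to be choreographed''; it breaks the soundness argument at its crux.

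You propose that the block of $U$ consist of $|A_1|^{|U|}$ variables, i.e.\ that a block encode a map $A_1^U\to B_1$, a $U$-ary polymorphism of $(\relstr A_1,\relstr B_1)$. Then $\xi(t_U)\subseteq\minion M_2$ consists of $U$-ary maps $A_2^U\to B_2$, and you try to turn each of these into a tuple in $B_2^U$ by ``applying it to a fixed generic tuple from $A_2^U$''. That operation yields a single element of $B_2$, not a function $U\to B_2$; and even if you repair it by applying the polymorphism coordinatewise to a $U\times U$ matrix, there is no reason the resulting tuple in $B_2^U$ would be a partial solution of $\Phi_2^{\relstr B_2}$. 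The proof that $\inst I_i(U)$ consists of partial solutions must use that polymorphisms of $(\relstr A_2,\relstr B_2)$ map tuples of partial solutions of $\Phi_2^{\relstr A_2}$ to partial solutions of $\Phi_2^{\relstr B_2}$; this only works if the arity of the decoded polymorphism is the set $D_U$ of partial solutions of $\Phi_2^{\relstr A_2}$ restricted to $U$, so that one can evaluate a $D_U$-ary polymorphism row by row on the $U\times D_U$ matrix whose columns are exactly those partial solutions. Concretely, the blocks must have $|A_1|^{|D_U|}$ variables, with $|D_U|$ up to $|A_2|^{k_0}$, not $|A_1|^{k_i}$.

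The same miscalibration poisons the inter-block constraints. You specify $\pi_{U_0,U_1}:[|U_0|]\to[|U_1|]$ by fixing $U_1\subseteq U_0$ and ``folding the remaining coordinates onto a fixed chosen one''. In the paper, the maps linking the blocks are the restriction maps $\pi_{U_0,U_1}:D_{U_0}\to D_{U_1}$ (restrict a partial solution on $U_0$ to $U_1$), which are the natural maps that commute with evaluation on the matrices $Z_U$. The ad hoc folding map has no such compatibility, so the chain of minors it induces in $\minion M_1$ would not correspond, under $\xi$ and evaluation, to the restriction operation on $B_2^{U_i}\to B_2^{U_j}$ that consistency of the PASes requires. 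The paper sidesteps all of this by first reducing to the free template $\freeR{2}{C}{\minion M_1}$ over a fixed alphabet $C$ of size $\geq|A_2^{k_0}|$ (encoding each $D_U$ into $C$), with the binary constraints being exactly these restriction maps, and only then applying the standard long-code reduction of \cref{thm:free}; this is what gives a fixed alphabet and makes the two minor operations match up. Your proposal has the right shape but needs this correction of the block arity and the inter-block maps before the soundness calculation can be carried through.
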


The condition for NP-hardness of $\PCSP(\relstr{A},\relstr{B})$ stated as Corollary 4.2. in~\cite{BWZ20} is equivalent to requiring that $\minion{M} = \Pol(\relstr{A},\relstr{B})$ has a $(d,r)$-homomorphism to the trivial minion $\minion{T}$ consisting of all the dictators on some (any) set $C$ of size at least 2 (a \emph{dictator} is the function $(c_1, \dots, c_n) \mapsto c_i$ for some $i \leq n$). \cref{thm:drhomo} additionally provides a reduction from $\PCSP(\relstr{A}',\relstr{B}')$ to $\PCSP(\relstr{A},\relstr{B})$ for any template $(\relstr{A}',\relstr{B}')$ whose polymorphism minion has a homomorphism to $\minion{T}$, such as any NP-hard CSP.

A special situation when a $(d,r)$-homomorphism with $r=1$ from $\minion{M}$ to $\minion{T}$ exists is when $\minion{M}$ does not contain a constant map and all members of $\minion{M}$ depend on at most $d$ coordinates (the homomorphism assigns to $t$ the dictators corresponding to the coordinates that $t$ depends on). This special situation is already quite useful for NP-hardness results~(see \cite{BBKO,KOWZ}). 

A new general consequence we can derive from \cref{thm:drhomo} is that, roughly, the complexity of a PCSP does not depend on low arity polymorphisms. More precisely, if two polymorphism minions differ only in functions that depend on bounded number of coordinates, then the corresponding PCSPs have the same complexity.

\section{Conclusion} \label{sec:conlusion}

We have shown that solutions to CSP instances can be reconstructed from weakly consistent small systems of partial solutions.

The first application was in showing a combinatorial version of (Layered) PCP Theorem, the Baby (Layered) PCP Theorem. One open question is whether there is a combinatorial analogue of the Parallel Repetition Theorem~\cite{Raz98}, in particular, whether the tame dependence of domain size on the value in Raz's result can be achieved in the combinatorial version (note that the dependence in the presented version is rather wild). Another direction is exploring combinatorial versions of known improvements of the PCP Theorem, in particular the Smooth Label Cover of Khot~\cite{Kho02} (cf. \cite{GS20}). 
Finally, the most interesting direction seems to be in exploring combinatorial versions of conjectural improvements of the PCP Theorem, e.g., the $d$-to-1 Conjecture~\cite{Kho02UG}. One of the combinatorial versions of this conjecture is the problem in \cref{thm:babyLPCP} restricted to $r=1$ and instances where every constraint $\psi$ is given by a $d$-to-1 map.

The second, main application of the main result was in providing a general  condition for the existence of a polynomial time reduction between two  PCSPs in terms of polymorphisms -- symmetries of the template. This, and similar such results should not be regarded as heavy hammers that are giving us reductions for free. They rather serve as tools that enable one to disregard the inessential layers  and concentrate on the core of the problem, which can then be  attacked using various methods 
(such as algebraic~\cite{Bul17,Zhu20}, topological~\cite{DRS05,BBKO,KOWZ}, or analytic~\cite{GS20}). As such tools, they \emph{are} indeed useful. 

Moreover, such a general condition seems \emph{necessary} for a prospective dichotomy result for PCSPs, since ``the non-existence of [some specific kind of a] homomorphism to the trivial minion $\minion{T}$'' can potentially be translated to a positive property that can be exploited by an algorithm (as was done in the CSP context~\cite{Bul17,Zhu20}), whereas ``the non-existence of series of tricks proving NP-hardness'' lacks this potential. 

We do not believe that $(d,r)$-homomorphism is already the right, sufficiently week concept. A concrete direction for an improvement is, besides the directions mentioned above,  to incorporate the reduction in~\cite{KOWZ} via an adjunction, which was used to significantly enlarge the NP-hardness region for the approximate graph coloring problem. It is interesting that the reduction in the proof of~\cref{thm:drhomo} works, but it does not seem to be explained by $(d,r)$-homomorphisms.

Another appealing direction for generalizing the reduction theorem is to ``let the Baby PCP Theorem grow up'', i.e. to consider weighted relations (cost functions), where tuples can have weights instead of just being present or absent. 
It may be challenging to obtain such a generalization (as a satisfactory analogue would cover, e.g., the PCP Theorem) 
but there are some indications that such a result is not out of reach: an analogue of~\cref{thm:oldred} is available~\cite{kazda} and Dinur's proof of the PCP Theorem~\cite{Din07} uses essentially the same two reductions as~\cref{thm:drhomo} -- they are substantially fine-tuned and repeated more times, but the essence \emph{is} the same.

\section{Appendix: Proof of \texorpdfstring{\cref{thm:main}}{the Main Theorem}} \label{sec:proofPCP}
For reader's convenience we recall the basic definitions and notations that appear in the proof.
The set of variables is denoted by $V$, while the domain is $A$.
By $\binom{V}{k}$ we denote the set of all $k$-element subsets of $V$,
and for a function $f:V\rightarrow A$ and $U\subseteq V$, the restriction of $f$ to $U$
will be denoted by $\proj_U f$; 
the same notation applies to sets of functions.

A \emph{partial assignment system}~(\emph{PAS})
of arity $k$~(\emph{$k$-PAS}) is a map such that for each $U\in\binom{V}{k}$ we have
$\emptyset \neq\inst I(U)\subseteq A^U$.
An $f\in A^V$ is an \emph{$m$-solution} of a $k$-PAS $\inst I$, 
if every $U\in \binom{V}{m}$ can be extended to $W\in\binom{V}{k}$ satisfying
$\proj_U f\in\proj_U \inst I(W)$. 
The \emph{value} of a PAS $\inst I$
is the maximal size of $\inst I(U)$.

Let $(\inst I_0,\dotsc,\inst I_r)$ be a sequence of partial assignment systems over common $V$ and $A$.
We call such a sequence \emph{consistent} if
\begin{itemize}
    \item their arities $k_0,\dotsc,k_r$ form a non-increasing sequence, and  
    \item for every $U_0\supseteq \dotsb \supseteq U_r$~(of sizes $k_0,\dotsc, k_r$)
    there exists $i<j$ such that $\inst I_j(U_j)\cap \proj_{U_j}\inst I_i(U_i)\neq\emptyset$.
\end{itemize}
The \emph{value} of such a sequence is the largest among values of $\inst I_i$.
Finally, we recall the theorem we are proving:

\main*

\subsection{Working with a single PAS}
  Let $\inst I$ be a $k$-PAS over $V$ and $A$.
  Let $X\subseteq V$ satisfy $|X|\leq k$ and $f\in A^X$;
  for $l\leq k$ we consider two $l$-properties a pair $(X, f)$ can have:
  \begin{align*}
    (P) \quad &\forall W\in \tbinom{V}{l}\ \exists U\in \tbinom{V}{k}\  X\cup W\subseteq U \text{ and } \proj_X g = f \text{ for some } g\in\inst I(U)\\
    (Q) \quad &\forall W\in \tbinom{V}{l}\ \exists U\in \tbinom{V}{k}\  X\cup W\subseteq U \text{ and } \proj_X g \neq f \text{ for all } g\in\inst I(U)
  \end{align*}
  For convenience, we state their negations as well:
  \begin{align*}
    (\neg P) \quad &\exists W\in \tbinom{V}{l}\ \forall U\in \tbinom{V}{k}\   X\cup W\subseteq U \text{ implies } \proj_X g \neq f \text{ for all } g\in\inst I(U)\\
    (\neg Q) \quad &\exists W\in \tbinom{V}{l}\ \forall U\in \tbinom{V}{k}\   X\cup W\subseteq U \text{ implies } \proj_X g = f \text{ for some } g\in\inst I(U)
  \end{align*}
  If $X=\set{v}$ and $f(v)=a$ we say that the property holds for $(v,a)$ instead of $(X,f)$;
  the sets $W$ in the definitions of $\neg Q, \neg P$ are called \emph{witnesses}.

  The first proposition states that if the parameters are suitable chosen,  property $P$ will appear.
  \begin{prop}\label{prop:allowsa}
  Let $\inst I$ be a $k$-PAS over $V$ and $A$.
  If $k\geq \size{A}^{\size{X}}l + |X|$ for  $X\subseteq V$,  then there is $f$ so that $(X,f)$ has $l$-property $P$.
  \end{prop}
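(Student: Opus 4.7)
I would argue by contradiction, using a straightforward union-of-witnesses argument that matches the parameter $|A|^{|X|} l + |X|$ appearing in the hypothesis.

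Suppose for contradiction that no $f \in A^X$ makes $(X,f)$ satisfy $l$-property $P$. Then every $f : X \to A$ satisfies $(\neg P)$, so for each such $f$ there is a witness $W_f \in \binom{V}{l}$ such that for every $U \in \binom{V}{k}$ with $X \cup W_f \subseteq U$ and every $g \in \inst{I}(U)$, we have $\proj_X g \neq f$. Since there are at most $|A|^{|X|}$ functions $f : X \to A$, the set
\[
  W := \bigcup_{f : X \to A} W_f
\]
has cardinality at most $|A|^{|X|} l$, and hence $|X \cup W| \leq |X| + |A|^{|X|} l \leq k$.

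Now extend $X \cup W$ to some $U \in \binom{V}{k}$; this is possible because we just bounded $|X \cup W|$ by $k$ (we can freely add elements of $V$, as we may assume $|V| \geq k$, otherwise the $k$-PAS $\inst I$ would be vacuous). By definition of a $k$-PAS we have $\inst{I}(U) \neq \emptyset$, so we can pick some $g \in \inst{I}(U)$, and set $f^* := \proj_X g \in A^X$.

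The contradiction is immediate: the set $W_{f^*}$ is contained in $W$, hence in $U$, so $U$ is a valid choice witnessing $(\neg P)$ for $f^*$, which forces $\proj_X g \neq f^*$ for every $g \in \inst I(U)$ --- contradicting our choice $f^* = \proj_X g$. I do not anticipate any real obstacle here; the entire content of the proof is the counting bound $|A|^{|X|} l$ for the number of witnesses that need to be accommodated inside a single $k$-set $U$, which exactly matches the hypothesis.
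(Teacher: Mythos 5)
Your proof is correct and is essentially identical to the paper's: both argue by contradiction, take the union of the witnesses $W_f$ over all $f \in A^X$, use the cardinality bound $|A|^{|X|}l + |X| \leq k$ to fit $X$ and all witnesses into a single $k$-set $U$, and then pick any $g \in \inst I(U)$ to obtain $\proj_X g \neq \proj_X g$.
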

  \begin{proof}
    Suppose, for a contradiction, that every $f\in A^X$ has property $\neg P$ and let 
    let $W_f$ be a witness of  $\neg P$ for $f$.
    Choose $W\in\binom{A}{k}$ so that $X\subseteq W$ and moreover $W_f\subseteq W$ for every $f\in A^X$ (which is possible by the assumed inequality).
    Fix an arbitrary $g\in\inst I(W)$ and note the contradiction: 
    since $W_{\proj_X g}\subseteq W$ we derived $\proj_X g\neq \proj_X g$.
  \end{proof}
  \noindent The next proposition concerns PASes of special form and will serve as a base for an inductive proof.
  It says that, given suitable parameters,
  if $\neg Q$ is present throughout the PAS of value $1$, then an $m$-solution can be found.
  \begin{prop}\label{prop:sol}
  Let $\inst I$ be a $k$-PAS over $V$ and $A$ and $\val(\inst I) =1$.
    If for every $v\in V$ there exists $a\in A$ such that $(v,a)$ has $l$-property $\neg Q$, 
    then 
    $\inst I$ is $\lfloor \frac{k}{l+1}\rfloor$-solvable.
  \end{prop}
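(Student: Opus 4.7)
The plan is to construct an $m$-solution explicitly from the witnesses of $\neg Q$ and then exploit the hypothesis $\val(\inst{I}) = 1$ to check that it works. For every $v \in V$, fix some $a_v \in A$ such that $(v, a_v)$ has $l$-property $\neg Q$, together with a witness $W_v \in \binom{V}{l}$. Unpacking $\neg Q$ in the presence of $\val(\inst{I})=1$ yields a very strong statement: for every $U \in \binom{V}{k}$ with $\{v\}\cup W_v \subseteq U$, the \emph{unique} function $g \in \inst{I}(U)$ satisfies $g(v) = a_v$. The natural candidate solution is then $f : V \to A$ defined by $f(v) = a_v$.

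To show $f$ is an $m$-solution with $m = \lfloor k/(l+1) \rfloor$, fix any $M \in \binom{V}{m}$, say $M = \{v_1, \dots, v_m\}$. I would take $W$ to be any $k$-element subset of $V$ containing $M \cup W_{v_1} \cup \dots \cup W_{v_m}$; such a $W$ exists because
\[
  \bigl|M \cup W_{v_1} \cup \dots \cup W_{v_m}\bigr| \;\leq\; m + m\cdot l \;=\; m(l+1) \;\leq\; k.
\]
Since $W \supseteq \{v_i\} \cup W_{v_i}$ for every $i$, the single function $g \in \inst{I}(W)$ must satisfy $g(v_i) = a_{v_i} = f(v_i)$ for each $i$ by the $\neg Q$ property applied at each $v_i$. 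Hence $\proj_M f = \proj_M g \in \proj_M \inst{I}(W)$, as required.

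There is no serious obstacle here; the statement is essentially a pigeonhole arrangement of witnesses, with $\val(\inst{I})=1$ doing the crucial work of guaranteeing that the single element of $\inst{I}(W)$ is simultaneously pinned to the desired value on every $v_i$. The only minor care point is making the quantifiers in $\neg Q$ line up: one must note that $\neg Q$ is stated per individual $v$, so the witnesses $W_{v_i}$ are chosen independently, and it is the union construction in the second paragraph that forces all of them to agree on one common $U = W$. This is exactly the role played by the bound $m(l+1) \leq k$ in the statement.
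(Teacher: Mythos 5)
Your proof is correct and follows essentially the same route as the paper's: define the candidate solution by picking, for each $v$, the value $a_v$ supplied by $\neg Q$; given an $m$-set $M$, pack $M$ together with all its witnesses $W_{v}$ (for $v \in M$) into a single $k$-set $W$ using the arithmetic bound $m(l+1) \le k$; then $\val(\inst I)=1$ forces the unique element of $\inst I(W)$ to agree with the candidate on all of $M$. The paper's write-up is just a compressed version of this same argument.
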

  \begin{proof}
    Let $s$ be the function mapping $v$ to the associated $a$ with $l$-property $\neg Q$.
    Choose $W\subseteq V$ 
    so that $|W|\leq\frac{k}{l+1}$ and let $W'\in\binom{V}{k}$ include $W$ as well as a witness for every $a\in W$.
    By property $\neg Q$, the projection $\proj_W \inst I(W')$ is equal to $\set{\proj_W s}$ and the proposition is proved.
  \end{proof}

  \subsection{Refining PASes}
    We will be repeatedly performing a construction called \emph{refining a PAS}:
    given 
    \begin{itemize}
      \item a $k$-PAS denoted by $\inst I$, 
      \item a number $l\leq k$, and 
      \item a mapping $\ex: {\binom{V}{l}} \rightarrow {\binom{V}{k}}$ satisfying $U\subseteq \ex(U)$,
    \end{itemize}
    we define an $l$-PAS $\inst J$ by putting $\inst J(U)$ to be $\proj_U(\inst I(\ex(U))$.
    That is, to define the value of $\inst J$ on $U$ we extend it to $\ex(U)$, use $\inst I$ to obtain an associated set of functions, and restrict these functions  to $U$.
    It follows from the definition that every $m$-solution of $\inst J$ is an $m$-solution of $\inst I$.

    The next proposition is very similar to \cref{prop:sol},
    but will be applied if a PAS has value greater than one.
    It states that if property $\neg Q$ can be found ``everywhere'', then the PAS can be turned to a consistent sequence of two PASes.

    \begin{prop}\label{prop:to1}
      Let $\inst I$ be a $k$-PAS.
      If every $X\in\binom{V}{k'}$ has an $f$ with $l$-property $\neg Q$ and $k\geq k''+ \binom{k''}{k'}l$,
      then there exists a $k'$-PAS $\inst I'$ of value  $1$
      and a $k''$-PAS $\inst I''$, a refinement of $\inst I$,
      so that $(\inst I'',\inst I')$ is compatible.
    \end{prop}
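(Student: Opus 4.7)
The plan is to construct $\inst I'$ directly from the witnesses provided by property $\neg Q$ and to define $\inst I''$ as a refinement of $\inst I$ whose extension map $\ex$ bundles all the relevant witnesses together. First, for each $X \in \binom{V}{k'}$, the hypothesis gives us some $f_X \in A^X$ with $l$-property $\neg Q$; fix one such $f_X$ together with a witness $W_X \in \binom{V}{l}$. Define $\inst I'(X) := \{f_X\}$; this is a $k'$-PAS of value $1$ by construction.

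Next, for each $U'' \in \binom{V}{k''}$, the union
\[
U'' \;\cup\; \bigcup_{U' \in \binom{U''}{k'}} W_{U'}
\]
has size at most $k'' + \binom{k''}{k'}\, l \leq k$, so we can pick $\ex(U'') \in \binom{V}{k}$ containing it. Set $\inst I''(U'') := \proj_{U''} \inst I(\ex(U''))$; this is nonempty because $\inst I(\ex(U''))$ is, so $\inst I''$ is indeed a $k''$-PAS, and by definition it is a refinement of $\inst I$.

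It remains to verify consistency of the pair $(\inst I'', \inst I')$. Given $U'' \supseteq U'$ of sizes $k'', k'$, we must exhibit an element of $\inst I'(U') \cap \proj_{U'} \inst I''(U'')$. Since $\ex(U'')$ was chosen to contain $U' \cup W_{U'}$ and belongs to $\binom{V}{k}$, the $\neg Q$-condition for $(U', f_{U'})$ with witness $W_{U'}$, applied to $U = \ex(U'')$, produces some $g \in \inst I(\ex(U''))$ with $\proj_{U'} g = f_{U'}$. Then $\proj_{U''} g \in \inst I''(U'')$ and its further restriction to $U'$ equals $f_{U'}$, which lies in $\inst I'(U')$. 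Hence $f_{U'} \in \inst I'(U') \cap \proj_{U'} \inst I''(U'')$ as required.

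There is no real obstacle: the whole argument is a bookkeeping exercise balancing the size constraint against the number of $k'$-subsets of a $k''$-set. The only point that warrants care is making sure the single extension $\ex(U'')$ simultaneously witnesses $\neg Q$ for \emph{all} $k'$-subsets of $U''$; this is precisely what the inequality $k \geq k'' + \binom{k''}{k'} l$ is designed to guarantee, and once that room is secured, consistency is immediate from the definition of $\neg Q$.
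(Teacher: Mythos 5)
Your proof is correct and takes essentially the same approach as the paper's: you define $\inst I'$ on each $k'$-set by the singleton $\{f_X\}$ from the $\neg Q$ hypothesis, and $\inst I''$ as the refinement of $\inst I$ via an extension map gathering $Y$ together with all witnesses $W_X$ for $X \in \binom{Y}{k'}$, then read off compatibility from the definition of $\neg Q$. The paper's proof is just a terser version of the same argument.
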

    \begin{proof}
      Define $\inst I'$ by putting $\inst I'(X) = \set{f}$ where $f$ has $l$-property $\neg Q$ for $X$ witnessed by $W_X$.
      For each $Y\in\binom{V}{k''}$ put $\ex(Y)$ to be any set of size $k$ including $Y\cup\bigcup_{X\in\binom{Y}{k'}} W_X$.
      Define a refinement of $\inst I$ according to $\ex$ and call it $\inst I''$.
      The definition of property $\neg Q$ provides compatibility of $(\inst I'', \inst I')$.
    \end{proof}

    \subsection{Putting things together, i.e., a proof of \texorpdfstring{\cref{thm:main}}{theMain Theorem}}
      We fix $A$ and $m$ and the sequence of values $(d_0,\dotsc, d_r)$~($r\geq 1$).
      We claim that there exist $k_0,\dotsc, k_r$ such that 
      every sequence of compatible PASes 
      $(\inst I_0,\dotsc,\inst I_r)$ such that 
      $\inst I_i$ is a $k_i$-PAS and $\val(\inst I_i)\leq d_i$ is $m$-solvable.

      The general idea is to transform the sequence 
      into another compatible sequence.
      This is achieved in two steps.
      In the first step we look at every, except for $\inst I_0$, PAS $\inst I_i$ separately.
      If the $\neg Q$ property ``can be found everywhere'' in the PAS, then either
      \cref{prop:sol} provides an $m$-solution, 
      or \cref{prop:to1} offers a reduction
      to a sequence $(\inst I'',\inst I')$ with $\val(\inst I') =1$.
      In the remaining case, we
      refine $(\inst I_0,\dotsc, \inst I_{r})$
      one by one, to obtain a new sequence 
      and then add a twist 
      that makes value of the new PAS at position zero at most $d_0-1$.
      This finishes the reduction.
      Note that the second case cannot happen when $d_0=1$, 
      and that \cref{prop:to1} can be applied at most once during the procedure.

      Formally, we proceed by induction on the sequence of values $(d_0,\dotsc,d_r)$.
      While working on $(d_0,\dotsc,d_r)$ we need the result established for
      \begin{itemize}
        \item sequence of values $(d_i,1)$ for each $i\geq 1$ with $d_i\geq 2$, and
        \item the sequence of values $(d_0-1,d_1,\dotsc, d_r)$, if $d_0\geq 2$.
      \end{itemize}
      In particular, to establish the base of induction, one needs to prove the result for sequence of values $(1,1)$.

      Let us fix a sequence $(d_0,\dotsc,d_r)$ and begin the proof.
      If $d_0\geq 2$ we let $p_0,\dotsc, p_r$ to be the sequence 
      provided by an inductive assumption, i.e.,
      if $(\inst J_0,\dotsc,\inst J_r)$ is a compatible sequence, 
      $\inst J_i$ is a $p_i$-PAS, and $\val(\inst J_i)\leq d_i$ while $\val(\inst J_0)\leq d_0 -1$,
      then some $\inst J_i$ has an $m$-solution.
      If $d_0 = 1$ we put $p_1=\dotsb=p_r = 1$.
      
      Next, we will 
      construct sequence $(k_0,\dotsc,k_r)$ and an auxiliary sentence $(l_0,\dotsc, l_r)$.
      Both sentences are constructed simultaneously from their last elements, $k_r$ and $l_r$, to the first ones.
      The sequences are defined as follows.
      \begin{itemize}
        \item For $i$ equal to $r, r-1, \dotsc, 1$ we put $l_i = p_i + \sum_{j\geq i+1}\binom{p_i}{p_j}(k_j-p_j)$~(if $i=r$ the sum contributes nothing)
          and compute $k_i$ from $l_i$:
          \begin{itemize}
            \item if $d_i=1$ we put $k_i = (l_i+1)m$~(we also fix $k'_i=1$ to be used later),
            \item otherwise we set $k''_i,k'_i$ to be the arities, which work for the sequence of values $(d_i,1)$,
              and put $k_i = k_i'' + \binom{k_i''}{k_i'}l_i$.
          \end{itemize}
        \item Finally, $l_0 = p_0 +\sum_{1\leq j}\binom{p_0}{p_j}(k_j-p_j)$~(i.e., exactly as above)
          and  let $k_0  =  \sum_{1\leq j}k'_j + \size{A}^{\sum_{1\leq j}k'_j}$.
      \end{itemize}

      In the first step of the proof, we assume $i\geq 1$ and work with every $k_i$-PAS $\inst I_i$ separately~(we ignore $\inst I_0$ in this step).
      If $d_i =1$ and
      \cref{prop:sol} can be applied to $\inst I_i$ with the parameter $l_i$ we obtain an $m$-solution and the proof is done.
      From now on we assume this is not the case and thus, if $d_i=1$,
      there exists $v_i$ 
      so that for all $a_i$ the pair $(v_i,a_i)$ has $l_i$-property $Q$ for $\inst I_i$.
      We put $X_i = \set{v_i}$ for later reference.

      If $d_i\geq 2$, the numbers $k_i'',k_i'$ provide an $m$-solution 
      for $(\inst I'',\inst I')$ whenever $\inst I''$ is a $k''$-PAS, $\inst I'$ is a $k'$-PAS, and $\val(\inst I'')=d_i$ while $\val(\inst I')=1$. 
      If \cref{prop:to1} can be applied to $\inst I_i$ with parameters $l_i,k_i'$, and $k_i''$~(in places of $l,k',k''$ respectively),
      we can reduce the problem to the pair of PASes provided by \cref{prop:to1}
      and a solution exists by inductive assumption.
      From now on we assume this is not the case and thus
      there exists $X_i$ of size $k'_i$
      with all $f\in A^{X_i}$ having $l_i$-property $Q$ in $\inst I_i$.

      In the second step, we put $X =\bigcup_i X_i$
      and use \cref{prop:allowsa} to find $f\in A^X$ so that $(X,f)$ has $l_0$-property $P$ in $\inst I_0$.
      Then, for every $i\geq 1$ we put $f_i = \proj_{X_i} f$.

      The last part is direct if a bit technical.
      We will define sequence $(\inst J_0,\dotsc,\inst J_r)$ 
      such that $\inst J_i$ is a $p_i$-PAS and a refinement of $\inst I_i$ for $i\geq 1$~(in particular $\val(\inst J_i)\leq\val(\inst I_i)$).
      The $p_0$-PAS $\inst J_0$ is a refinement of $\inst I_0$ with enough functions removed
      so that $\val(\inst J_0)\leq d_0-1$.
      To fix these refinements,
      we need to define a  map $\ex_i$ for every PAS $\inst I_i$.
      
      We start with $i=r$ and progressively define $\ex_i$ for smaller $i$. 
      For $Y\in \binom{V}{p_i}$ we put $\ex_i(Y)$ to be a set $U$ provided 
      by property $Q$ for $(X_i,f_i)$ and 
      \begin{equation*}
        W = Y \cup \bigcup_{j\geq i+1}\bigcup_{Z\in\binom{Y}{p_j}}\ex_j(Z).
      \end{equation*}
      In the PAS $\inst I_0$ we additionally remove everything that arose from functions extending $f$, that is,
       $\inst J_0(Y) = \proj_Y \inst I_0(\set{g\in \ex_0(Y)| \proj_X g\neq f})$.

      It remains to confirm that the sequence $(\inst J_0,\dotsc,\inst J_r)$ is compatible.
      Let $Y_0\supseteq Y_1\supseteq \dotsb \supseteq Y_r$ be a sequence~
(for the $\inst J_i$, i.e., of sizes $p_0, \dotsc, p_r$,   respectively) and
      consider the sequence $\ex_1(Y_0), \dotsc, \ex_r(Y_r)$~(for the $\inst I_i$, i.e., of sizes $k_0, \dotsc, k_r$,   respectively).
      Note that, by the definition of $W$ in the previous paragraph,
      we have $\ex_0(Y_0)\supseteq \dotsb \supseteq \ex_r(Y_r)$.
      
      By the compatibility of the original sequence, we get
      $i<j$ and $g\in \inst I_j(\ex_j(Y_j)) \cap \proj_{Y_j}\inst I_i(\ex(Y_i))$.
      If $1\leq i<j$, the conclusion is now immediate:
      $\proj_{Y_j} (g)\in \inst J_j(Y_j) \cap \proj_{Y_j}\inst J_i(Y_i)$.
      If $0=i<j$, we additionally need to make sure that that the element 
      $g'\in\inst I_0(\ex_0(Y_0))$
      satisfying $\proj_{Y_j}(g')=g$ satisfies $\proj_X(g')\neq f$.
      This fact follows from the choice of $f_j$ and  $X_j$:
      $X_j\subseteq \ex_j(Y_j)$ and, by  property $Q$,
      $\proj_{X_j} g\neq f_j = \proj_{X_j} f$.
      Clearly $\proj_{X_j}(g') = \proj_{X_j}(g)\neq \proj_{X_j} f$ as required, and the proof is complete (noticing that the procedure in the proof gives a polynomial time algorithm).

      Note that in the case $d_0=1$, we would obtain a compatible sequence containing a PAS with value $0$.
      This is clearly impossible and shows that the second case of the proof cannot happen if $d_r=1$.
      In particular, in the base case of induction~(i.e., with the sequence of values $(1,1)$),
      the only possible scenario is that the application of \cref{prop:sol} in the first step provides an $m$-solution.
 
\section{Appendix: Proof of \texorpdfstring{\cref{thm:drhomo}}{the new reduction}} \label{sec:proofRed}
\subsection{Polymorphisms of general arity} \label{subsec:generalArity}
It is convenient to slightly extend the notion of minions and
polymorphism so that the arity can be any set, not just a natural number. 
This way we can avoid ad hoc (and confusing) choices of bijections between $X$ and $[|X|]$.

An $X$-ary \emph{polymorphism} of a PCSP template $(\relstr A, \relstr B)$, where $X$ is a finite nonempty set, is a map $t: A^X \to B$ such that for any relation $R_i$ of $\relstr{A}$ and any $[\ar i] \times X$ matrix $Z \in A^{[\ar i] \times X}$ whose each column $Z[ \bla, x]$ is in $R_i$, the tuple $t(Z)$ obtained by applying $t$ to the rows ($Z[j, \bla]$) is in the corresponding relation $S_i$ of $\relstr{B}$. Note that an $n$-ary polymorphism as defined in \cref{sec:PCSPs} is the same as an $[n]$-ary polymorphism according to this extended definition.

Let $t:A^X\rightarrow B$ and $\pi:X\rightarrow Y$. We say that $s:A^Y\rightarrow B$ is a \emph{minor}~(or $\pi$-minor, if $\pi$ matters) of $t$ and write $t\minor{\pi} s$ if $s(f) = t(f\pi)$ for every $f \in A^Y$.
For the sake of clartiy we extend the minor notation: instead $t\minor{\pi}s$~(as above) we will sometimes write $t\Mminor{\pi}{X}{Y}s$
to stress the fact that $t$ is $X$-ary, $s$ is $Y$-ary and $\pi$ is viewed as mapping $X$ into $Y$.
Note that polymorphisms of a PCSP template (of general arity) are still closed under taking minors. We also extend the definitions of minion, polymorphism minion, and minion homomorphism in the obvious way to accommodate functions of any arity.\footnote{If one defines $\minion MX$ as the set of all $X$-ary polymorphisms and $\minion M\pi(t) = s$ for $t \minor{\pi} s$ as above, $\minion M$ becomes a functor from the category of nonempty finite sets to itself. Minor homomorphisms then exactly correspond to natural transformations. However, we follow the more standard notation in this paper.}

A simple but crucial property of polymorphisms is that it maps tuples  of (partial) solutions of the strict version of an instance to (partial) solutions of the relaxed instance. More precisely,
if $\Phi$ is an instance of $\PCSP(\relstr{A},\relstr{B})$, 
$t$ is an $X$-ary polymorphism of $(\relstr{A},\relstr{B})$ and $Z$ is a $U \times X$ matrix whose each column $Z[\bla,x]$ is a partial solution (a map $U\to A$) to $\Phi^{\relstr{A}}$, then $t(Z)$ (a map $U \to B$) is 
a partial solution to $\Phi^{\relstr{B}}$.

\subsection{Free PCSP templates}

 Let $\minion M$ be any minion, $C$ any nonempty finite set, and $R\subseteq C^m$ any nonempty relation.
  We follow~\cite{BBKO} and define 
  an $m$-ary relation $\free{C}{\minion M}(R)$
  over the set of $C$-ary functions of $\minion M$:
  denoting $\proj_1, \dots, \proj_m$ the projection maps $R \to C$, we define  
  \begin{equation*}
    (s_1,\dotsc,s_m)\in\free{C}{\minion M}(R)\text{ iff } \exists t\in\minion M\mbox{ (arity $R$) }: \ t \Mminor{\proj_i}{R}{C} s_i \text{ for all $i$}.
  \end{equation*}
(i.e. the arity of $s_i$ does not depend on the set of elements that actually appear on position $i$ in the tuples in $R$).
  
  For a minion $\minion M$, a fixed positive integer $m$, and a (finite nonempty) set $C$ we define the \emph{$m$-ary free PCSP template on $C$}
  by $(\freeL{m}{C},\freeR{m}{C}{\minion M})$, where
  \begin{itemize}
  \item $\freeL{m}{C} = (C;R_1,\dotsc,R_?)$ where the relations list every relation on $A$ of arity at most $m$ and 
  \item $\freeR{m}{C}{\minion M}$ is build on the set of $C$-ary members of $\minion M$ 
    and the relation $S_i$ corresponding to $R_i$ in $\freeL{m}{A}$ is $S_i = \free{C}{\minion M}(R_i)$ in $\freeR{m}{C}{\minion M}$.
  \end{itemize}
  
  The following reduction will serve as the second reduction in the proof of \cref{thm:drhomo}.
  
  \begin{theorem}[\cite{BBKO}] \label{thm:free}
  Let $(\relstr A,\relstr B)$ be a PCSP template, 
$m \in \EN$, and $C\neq\emptyset$ be finite. Then $\PCSP(\freeL{m}{C},\freeR{m}{C}{\Pol(\relstr A,\relstr B)})$ is log-space reducible to $\PCSP(\relstr A,\relstr B)$.
  \end{theorem}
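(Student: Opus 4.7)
The plan is to prove \cref{thm:free} by a ``long-code''--style reduction, which is essentially the argument establishing Theorem 3.1 of~\cite{BBKO}. Given an instance $\Phi$ of $\PCSP(\freeL{m}{C},\freeR{m}{C}{\Pol(\relstr A,\relstr B)})$ with variable set $V$, I would build an instance $\Psi$ of $\PCSP(\relstr A,\relstr B)$ whose variable set splits into two families. The first family contains variables $x_{v,\alpha}$ indexed by $v\in V$ and $\alpha\in A^{C}$, intended to carry the value at $\alpha$ of the $C$-ary polymorphism that will eventually be assigned to $v$. The second family contains auxiliary variables $x_{\varphi,\gamma}$ indexed by a constraint $\varphi=((v_1,\dotsc,v_m),R)$ of $\Phi$ and a function $\gamma\in A^{R}$; these carry the value at $\gamma$ of an $R$-ary polymorphism that is supposed to witness membership of $(s_{v_1},\dotsc,s_{v_m})$ in $\free{C}{\Pol(\relstr A,\relstr B)}(R)$.

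The constraints of $\Psi$ will come in three groups. For each $v\in V$, each $k$-ary relation $R'$ of $\relstr A$, and each matrix $M\in A^{[k]\times C}$ whose columns lie in $R'$, I impose the $\relstr A/\relstr B$ constraint with relation $R'$ on the tuple $(x_{v,M[1,\bla]},\dotsc,x_{v,M[k,\bla]})$; the analogous rule is applied to the $x_{\varphi,\bla}$ family using $[k]\times R$ matrices. These ``local'' constraints are precisely what will force the maps $\alpha\mapsto g(x_{v,\alpha})$ and $\gamma\mapsto g(x_{\varphi,\gamma})$, induced by any relaxed solution $g$, to be polymorphisms of $(\relstr A,\relstr B)$. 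The third, ``bridge'', group identifies $x_{v_i,\beta}$ with $x_{\varphi,\beta\circ\proj_i}$ for every constraint $\varphi=((v_1,\dotsc,v_m),R)$, every $\beta\in A^{C}$ and every $i\leq m$; this encodes exactly the minor relation $t_\varphi\Mminor{\proj_i}{R}{C}s_{v_i}$.

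Soundness will be direct: from a strict solution $h:V\to C$ of $\Phi^{\freeL{m}{C}}$ I define $f(x_{v,\alpha}) = \alpha(h(v))$ and $f(x_{\varphi,\gamma}) = \gamma(h(v_1),\dotsc,h(v_m))$, which is well-defined because $(h(v_1),\dotsc,h(v_m))\in R$; all three families of constraints then reduce to the hypothesis that the columns of $M$ lie in $R'$. For completeness, a relaxed solution $g:V(\Psi)\to B$ of $\Psi^{\relstr B}$ gives $s_v(\alpha):=g(x_{v,\alpha})$ and $t_\varphi(\gamma):=g(x_{\varphi,\gamma})$; the ``local'' constraints certify that these functions are polymorphisms of the appropriate arity, and the bridge identifications assert precisely that $t_\varphi\Mminor{\proj_i}{R}{C}s_{v_i}$, so by the very definition of $\free{C}{\Pol(\relstr A,\relstr B)}(R)$ we obtain $(s_{v_1},\dotsc,s_{v_m})\in\free{C}{\Pol(\relstr A,\relstr B)}(R)$. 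Hence $v\mapsto s_v$ solves $\Phi^{\freeR{m}{C}{\Pol(\relstr A,\relstr B)}}$.

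For $\relstr A$, $\relstr B$, $m$, $C$ fixed, the variables and constraints of $\Psi$ are polynomial (in fact linear) in the size of $\Phi$ and can be produced by a purely local rule, which gives the log-space bound. The one place requiring genuine care is the implementation of the bridge identifications: either one literally quotients the variable set by the generated equivalence (and must verify that this does not collapse indices in a harmful way), or, as in~\cite{BBKO}, one assumes without loss of generality that $\relstr A$ contains the equality relation, since adding it does not alter the complexity of $\PCSP(\relstr A,\relstr B)$. I expect this bookkeeping to be the only delicate point; the rest of the proof is a routine unfolding of the definitions of $\free{C}{\Pol(\relstr A,\relstr B)}$ and of polymorphism.
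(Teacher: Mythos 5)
Your proposal is correct and is essentially the same argument the paper relies on: the paper formally derives \cref{thm:free} by exhibiting the minion homomorphism $\xi(t)(f)(g)=t(gf)$ from $\Pol(\relstr A,\relstr B)$ to the polymorphism minion of the free template and invoking \cref{thm:oldred}, while noting that the resulting reduction is exactly the long-code construction you spell out (a cloud of $A^C$ variables per variable, a cloud of $A^R$ variables per constraint, polymorphism-forcing constraints on each cloud, and merging of variables to encode the minor conditions $t_\varphi \Mminor{\proj_i}{R}{C} s_{v_i}$). Your direct soundness/completeness verification, and your handling of the identifications by quotienting or by adding an equality relation, match what is done in~\cite{BBKO}, so there is no gap.
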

  \begin{proof}[Comments on the proof]
     The mapping $\xi: \Pol(\relstr{A},\relstr{B}) \to \Pol(\freeL{m}{C},\freeR{m}{C}{\Pol(\relstr A,\relstr B)})$ defined for an $X$-ary polymorphism $t$ of $(\relstr{A},\relstr{B})$ by $\xi(t)(f)(g) = t(gf)$ for every $f \in C^X$, $g \in A^C$ is a minion homomorphism (this is the minion homomorphism $\phi$ from Section 4.1 of~\cite{BBKO}), so the claim follows from \cref{thm:oldred}. 
     
     For the interested readers, we mention that the reduction is the standard long code reduction. It works as follows. For each original variable we introduce a cloud of $A^C$ variables (that are meant to provide the long code of the original variable) and for each original constraint involving relation $R$ we introduce a cloud of $A^R$ variables (meant to provide the long code of a member of $R$). We introduce constraints which say that each cloud determines a polymorphism of $(\relstr{A},\relstr{B})$ and finally we merge suitable variables to ensure satisfaction of the original constraints.  
     
     As a final remark, let us mention that a reduction in the opposite direction works as well, provided $m$ and $C$ are sufficiently large~\cite{BBKO}. 
  \end{proof}

In the proof of~\cref{thm:drhomo} we will use a 2-ary free PCSP template and only use relations that are graphs of maps from a subset of $C$ to $C$. 
The following notation and observation will come in handy.
We denote the identity map by $\id$,
independent on its domain or co-domain, and 
if $\pi$ is a function with domain $C_1$ and co-domain $C_2$ satisfying $C_1\cup C_2\subseteq C$, we treat $\pi$ as a subset of $C^2$.

\begin{lemma} \label{lem:partialMap}
Let $\minion{M}$ be a minion, $C$ a (finite nonempty) set, $C_1$ and $C_2$ subsets of $C$, and $\pi: C_1 \to C_2$ a map. 
Then $(s_1,s_2) \in \free{C}{\minion{M}}(\pi)$ if and only if  there exist 
$C_i$-ary members $t_i$ of $\minion{M}$, where $i=1,2$, such that $t_i \Mminor{\id}{C_i}{C} s_i$, and  $t_1 \Mminor{\pi}{C_1}{C_2} t_2$.
\end{lemma}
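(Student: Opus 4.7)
The plan is to unwind the two minor relations defining $\free{C}{\minion{M}}(\pi)$ and the conjunction on the right-hand side into explicit identities, and exploit two elementary observations about the graph $\pi \subseteq C^2$ of the function $\pi\colon C_1 \to C_2$: the projection $\proj_1\colon \pi \to C_1$ is a bijection, and $\proj_2 = \pi \circ \proj_1$ as maps $\pi \to C_2$.

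For the forward direction, starting from a witness $t \in \minion{M}$ of arity $\pi$ satisfying $t \Mminor{\proj_i}{\pi}{C} s_i$ for $i=1,2$, I would take $t_1$ to be the $\proj_1$-minor of $t$ and $t_2$ to be the $\pi$-minor of $t_1$. Both lie in $\minion{M}$ by closure under minors, and $t_1 \Mminor{\pi}{C_1}{C_2} t_2$ holds by construction. The relation $t_1 \Mminor{\id}{C_1}{C} s_1$ is routine because $\proj_1$ already maps into $C_1$, and $t_2 \Mminor{\id}{C_2}{C} s_2$ reduces, via the identity $\pi \circ \proj_1 = \proj_2$, to the hypothesis $t \Mminor{\proj_2}{\pi}{C} s_2$.

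For the backward direction, I would let $\iota\colon C_1 \to \pi$ denote the inverse of $\proj_1$ and set $t$ to be the $\iota$-minor of $t_1$, which again lies in $\minion{M}$. Checking $t \Mminor{\proj_1}{\pi}{C} s_1$ uses $\proj_1 \circ \iota = \id_{C_1}$ together with the hypothesis on $s_1$. Checking $t \Mminor{\proj_2}{\pi}{C} s_2$ uses $\proj_2 \circ \iota = \pi$ to rewrite $t(g \circ \proj_2) = t_1(g \circ \pi)$, and then applies $t_1 \Mminor{\pi}{C_1}{C_2} t_2$ followed by $t_2 \Mminor{\id}{C_2}{C} s_2$ to conclude this equals $s_2(g)$.

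There is no genuine obstacle; the lemma is pure definition chasing. The one point requiring care is keeping arities and the direction of each minor straight --- in particular, $t_1$ is taken as a minor of $t$ in the forward direction but $t$ is taken as a minor of $t_1$ in the backward direction. The single piece of content is the factorization $\proj_2 = \pi \circ \proj_1$, which is exactly what allows the single polymorphism $t$ and the pair $(t_1, t_2)$ joined by a $\pi$-minor to encode the same data.
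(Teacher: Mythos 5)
Your proof is correct and is exactly the kind of definition-unwinding the paper has in mind when it says ``Straightforward''; the key observation that $\proj_1\colon\pi\to C_1$ is a bijection and that $\proj_2=\pi\circ\proj_1$ is precisely what makes the two sides of the equivalence encode the same witness $t$.
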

\begin{proof}
 Straightforward.  
\end{proof}

Note that for any injective $\iota: X \to Y$ and and $Y$-ary $s$ there exists at most one $X$-ary $t$ with $t \minor{\iota} s$. In particular, the $t_i$ in the lemma are unique. 

\subsection{The proof}

\drHomoReduction*

\begin{proof}
   Given an instance $\Phi$ of $\PCSP(\relstr A_2,\relstr B_2)$ we produce $\Psi$, which is an instance of
   $\PCSP(\freeL{2}{C},\freeR{2}{C}{\minion{M}_1})$~
   (for $C$ which is fixed and does not depend on $\Phi$). 
   Then we use the reduction from \cref{thm:free} to produce an instance of $\PCSP(\relstr A_1,\relstr B_1)$.
   
   Let $k_0, \dotsc, k_r$ be the number provided by \cref{cor:lcgt} for $B_2$~(in place of $A$) and let $m$ be the maximal arity of a relation in $\relstr{A}_2$ (or $\relstr{B}_2$).
   The last thing we need to fix is $C$;
   it would be most convenient to have a different domain for each variable of $\Psi$ 
   since then we could define the reduction in essentially the same  way as in the proof of \cref{thm:babyLPCP}. 
   However, we do not have such a freedom (see the remarks in \cref{subsec:rants}) and we set $C$ to be  an arbitrary set of size at least $|A_2^{k_0}|$.

  Our reduction transforms an instance $\Phi$ of $\PCSP(\relstr A_2,\relstr B_2)$ with a set of variables $V$ 
  to an auxiliary instance $\Psi'$ and then to an instance $\Psi$ of $\PCSP(\freeL{2}{C},\freeR{2}{C}{\minion{M}_1})$.
  The set of variables, of both $\Psi'$ and $\Psi$, is 
  $\Vars V = \bigcup_i \Vars V_i$ where $\Vars V_i = \binom{V}{k_i}$.
  For each $U\in\Vars V$ we put $D_U\subseteq A_2^U$ to be the set of partial solutions to $\Phi^{\relstr A_2}$~(the set needs to be non-empty as a solution of $\Phi^{\relstr A_2}$ is promised).
  The constraints of $\Psi'$ are 
  introduced for each pair of elements of $\Vars V$ satisfying $U\supseteq W$; 
  we put $((U,W),\pi_{U,W})$ where $\pi_{U,W} = \set{ (f,g)\in D_U\times D_W| \proj_W f=g})$.
  Note that $\pi_{U,W}$ is in fact a function from $D_U$ into $D_W$~(as a restriction of a partial solution is a partial solution).
  The only problem with the instance $\Psi'$ is that its domain is huge, and the reduction requires a domain of constant~(i.e. independent on $\Phi$) size.
  This problem is resolved in a rather pedestrian fashion.

  For each $U\in\Vars V$ we fix $\sigma_U$ as a bijection between $D_U$ and some $C_U\subseteq C$.
  Then for each constraint $((U,W),\pi_{U,W})$ of $\Psi'$ we introduce into $\Psi$ the constraint
  $((U,W),\sigma(\pi_{U,W}))$ where $\sigma (\pi_{U,W}) = \set{(\sigma_U(f),\sigma_W(g))| (f,g)\in \pi_{U,W}}$.
  In essence, the last transformation renames the elements of the domain without altering the structure of the instance.
  Thus we obtain an instance with domain $C$ and reduction is finished.

  Soundness of the reduction is, again, immediate: if $h: V \to A$ is a solution to $\Phi^{\relstr{A}_2}$ , 
  then $s$ defined by $s'(U) = \proj_U h$ is a solution to $\Psi'$ and $s(U)=\sigma_U(s'(U))$ a solution to $\Psi$.

  For the completeness part, take a solution $s$ of $\Psi^{\freeR{2}{C}{\minion{M}_1}}$, that is, 
  for each $U \in X$, $s(U)$ is a $C$-ary  member of $\minion{M}_1$,  
  and $s$ satisfies all the constraints, that is, $(s(U),s(W)) \in \free{\minion M_1}{C}(\sigma(\pi_{U,W}))$ for any $U\supseteq W$ in $\Vars V$. 
  \cref{lem:partialMap} now delivers two pieces of information:
  \begin{itemize}
    \item For each $U\in\Vars V$ there exist unique $t'_U\in\minion{M}_1$ such that $t'_U \Mminor{\id}{C_U}{C} s(U)$.
    \item For any $U \supseteq W$ in $\Vars V$ we have
      $t'_U \Mminor{\sigma(\pi_{U,W})}{C_U}{C_W} t'_W$. 
  \end{itemize}
  By defining a $D_U$-ary $t_U$ by $t'_U \Mminor{\sigma_U^{-1}}{C_U}{D_U} t_U$ we finally obtain 
  \begin{equation}\label{eq:xx}
    t_U \Mminor{\pi_{U,W}}{D_U}{D_W} t_W \text{ for any $W,U\in \Vars V$ such that $U \supseteq W$.}
  \end{equation}

 It remains to decode the $t_U$ into a sequence of PASes. 
  To this end we first define  (for any $U \in \Vars V$) a $U \times D_U$ matrix $Z_U$ 
  by $Z_U[u,f] = f(u)$ for $u \in U$ and $f \in D_U$. 
  Observe that each column $Z_U[\bla,f]$ of this matrix is a partial solution of $\Phi^{\relstr A_2}$, namely, $f$. 
 
 For $0 \leq i \leq r$ and $U \in \Vars V_i$ define $\inst{I}_i(U) = \set{q(Z_U)| q \in \xi(t_U)}$, and note that every element of $\inst{I}_i(U)$ is a partial solution to $\Phi^{\relstr{B}_2}$ (recall the remark in the final paragraph of \cref{subsec:generalArity}), that the size of $\inst{I}_i(U)$ is at most $d$ (by the definition of $(d,r)$-homomorphism), and that $\inst{I}_i$ is a $k_i$-PAS. 
 
 It remains to verify consistency. Let $U_0 \supseteq U_1 \supseteq \cdots \supseteq U_r$ be subsets of $V$ (of sizes $k_0, k_1, \dots, k_r$) and consider the chain of minors
  $t_{U_0}\minor{\pi_{U_0,U_1}}t_{U_1}\minor{\pi_{U_1,U_2}}\dotsb\minor{\pi_{U_{r-1},U_r}}t_{U_r}$ that we have from \cref{eq:xx}. By the definition of $(d,r)$-homomorphism, there 
  exist $i<j$ and $q_i \in \xi(t_{U_i})$, $q_j \in \xi(t_{U_j})$ such that $q_i \minor{\pi_{U_i,U_j}} q_j$.
  
  We claim that $\proj_{U_j} q_i(Z_{U_i}) = q_j(Z_{U_j})$ -- then this element witnesses $\inst{I}(U_j) \cap \proj_{U_j} \inst{I}(U_i)$ and consistency is established. 
  To prove the claim, we need to verify that, for each $u \in U_j$, $q_i$ applied to the $u$-th row of $Z_{U_i}$~(i.e. a map $w_i : D_{U_i}\rightarrow A_2$ mapping $f\mapsto f(u)$)
  gives the same element of $B_2$ as $q_j$ applied to the $u$-th row (denoted $w_j$) of the matrix $Z_{U_j}$.

  Since $q_i \minor{\pi_{U_i,U_j}} q_j$, 
  we have $q_j(w_j) = q_i(w_j \pi_{U_i,U_j})$ by definition of minors.
  It is enough to verify $q_i(w_i) = q_i(w_j \pi_{U_i,U_j})$. 
  But this is clear -- for any $f \in D_{U_i}$ we have $w_i(f) = f(u)$ (by definition of the matrix) 
  and $w_j\pi_{U_i,U_j}(f) = (\pi_{U_i,U_j}(f))(u) = (\proj_{U_j} f) (u) = f(u)$.

  We have shown that the value $\val_{k_1,\dotsc,k_r}(\Phi^{\relstr{B}_2})$ is at most $d$,
  and by \cref{cor:lcgt} it must be $1$, which makes $\Phi^{\relstr{B}_2}$ solvable.
  This finishes the proof of soundness and of \cref{thm:drhomo}
  \end{proof}

\subsection{Multisorted PCSP} \label{subsec:rants}

There are two phenomena apparent from the proof (among other contexts) worth a short note. The first one is that it would be convenient to allow multiple domains for variables, e.g., to work with multi-sorted relational structures. The second one is that we have only used relations that are graphs of functions (the functions were partial, but they would become proper had we multiple sorts). If we do these modification to the definition of a CSP template (i.e., allow multiple sorts but only binary constraints that are graphs of functions\footnote{\dots so a template for CSP can be defined as a finite subcategory of the category of finite sets -- this is perhaps a nicer definition than via structures}), the framework we get would become richer: we could still express all the (P)CSPs (by replacing relations by projection maps) and, moreover, Layered Label Cover would become a CSP, the gap version from \cref{thm:babyLPCP} with $r=1$ would become a PCSP, and the gap version from \cref{thm:LPCP} with $r=1$ would become a Valued PCSP. Note, however, that the gap versions with $r>1$ would still not be (V)PCSPs. Is this because Gap Layered Label Cover is an unnatural problem which will eventually become obsolete, or is it hinting us toward a better framework?

\bibliographystyle{plainurl}

\begin{thebibliography}{10}

\bibitem{daBook}
Sanjeev Arora and Boaz Barak.
\newblock {\em Computational complexity: a modern approach}.
\newblock Cambridge University Press, 2009.

\bibitem{ALMSS98}
Sanjeev Arora, Carsten Lund, Rajeev Motwani, Madhu Sudan, and Mario Szegedy.
\newblock Proof verification and the hardness of approximation problems.
\newblock {\em J. {ACM}}, 45(3):501--555, 1998.

\bibitem{AS98}
Sanjeev Arora and Shmuel Safra.
\newblock Probabilistic checking of proofs: A new characterization of {NP}.
\newblock {\em J. {ACM}}, 45(1):70--122, January 1998.
\newblock \href {https://doi.org/10.1145/273865.273901}
  {\path{doi:10.1145/273865.273901}}.

\bibitem{ABP20}
Per Austrin, Amey Bhangale, and Aditya Potukuchi.
\newblock Improved inapproximability of rainbow coloring.
\newblock In {\em Proceedings of the Thirty-First Annual ACM-SIAM Symposium on
  Discrete Algorithms}, SODA '20, page 1479–1495, USA, 2020. Society for
  Industrial and Applied Mathematics.

\bibitem{AGH17}
Per Austrin, Venkatesan Guruswami, and Johan H{\aa}stad.
\newblock $(2+\epsilon)$-{S}at is {NP}-hard.
\newblock {\em {SIAM} J. Comput.}, 46(5):1554--1573, 2017.
\newblock \href {https://doi.org/10.1137/15M1006507}
  {\path{doi:10.1137/15M1006507}}.

\bibitem{AH13}
Per Austrin and Johan H\r{a}stad.
\newblock On the usefulness of predicates.
\newblock {\em ACM Trans. Comput. Theory}, 5(1), May 2013.
\newblock \href {https://doi.org/10.1145/2462896.2462897}
  {\path{doi:10.1145/2462896.2462897}}.

\bibitem{BBB21}
Libor Barto, Diego Battistelli, and Kevin~M. Berg.
\newblock {Symmetric Promise Constraint Satisfaction Problems: Beyond the
  Boolean Case}.
\newblock In Markus Bl\"{a}ser and Benjamin Monmege, editors, {\em 38th
  International Symposium on Theoretical Aspects of Computer Science (STACS
  2021)}, volume 187 of {\em Leibniz International Proceedings in Informatics
  (LIPIcs)}, pages 10:1--10:16, Dagstuhl, Germany, 2021. Schloss Dagstuhl --
  Leibniz-Zentrum f{\"u}r Informatik.
\newblock URL: \url{https://drops.dagstuhl.de/opus/volltexte/2021/13655}, \href
  {https://doi.org/10.4230/LIPIcs.STACS.2021.10}
  {\path{doi:10.4230/LIPIcs.STACS.2021.10}}.

\bibitem{BBKO}
Libor Barto, Jakub Bulín, Andrei Krokhin, and Jakub Opršal.
\newblock Algebraic approach to promise constraint satisfaction, 2019.
\newblock \href {http://arxiv.org/abs/1811.00970} {\path{arXiv:1811.00970}}.

\bibitem{BKW17}
Libor Barto, Andrei Krokhin, and Ross Willard.
\newblock Polymorphisms, and how to use them.
\newblock In Andrei Krokhin and Stanislav {\v Z}ivn{\'y}, editors, {\em The
  Constraint Satisfaction Problem: Complexity and Approximability}, volume~7 of
  {\em Dagstuhl Follow-Ups}, pages 1--44. Schloss Dagstuhl--Leibniz-Zentrum
  fuer Informatik, Dagstuhl, Germany, 2017.
\newblock URL: \url{http://drops.dagstuhl.de/opus/volltexte/2017/6959}, \href
  {https://doi.org/10.4230/DFU.Vol7.15301.1}
  {\path{doi:10.4230/DFU.Vol7.15301.1}}.

\bibitem{BOP18}
Libor Barto, Jakub Opr{\v{s}}al, and Michael Pinsker.
\newblock The wonderland of reflections.
\newblock {\em Israel Journal of Mathematics}, 223(1):363--398, Feb 2018.
\newblock \href {https://doi.org/10.1007/s11856-017-1621-9}
  {\path{doi:10.1007/s11856-017-1621-9}}.

\bibitem{BG16}
Joshua Brakensiek and Venkatesan Guruswami.
\newblock New hardness results for graph and hypergraph colorings.
\newblock In Ran Raz, editor, {\em 31st Conference on Computational Complexity
  (CCC 2016)}, volume~50 of {\em Leibniz International Proceedings in
  Informatics (LIPIcs)}, pages 14:1--14:27, Dagstuhl, Germany, 2016. Schloss
  Dagstuhl--Leibniz-Zentrum fuer Informatik.
\newblock \href {https://doi.org/10.4230/LIPIcs.CCC.2016.14}
  {\path{doi:10.4230/LIPIcs.CCC.2016.14}}.

\bibitem{BG16a}
Joshua Brakensiek and Venkatesan Guruswami.
\newblock Promise constraint satisfaction: Algebraic structure and a symmetric
  boolean dichotomy.
\newblock {\em ECCC}, Report No.\ 183, 2016.
\newblock URL: \url{https://eccc.weizmann.ac.il/report/2016/183/}.

\bibitem{BG18}
Joshua Brakensiek and Venkatesan Guruswami.
\newblock Promise constraint satisfaction: Structure theory and a symmetric
  boolean dichotomy.
\newblock In {\em Proceedings of the Twenty-Ninth Annual ACM-SIAM Symposium on
  Discrete Algorithms}, SODA'18, pages 1782--1801, Philadelphia, PA, USA, 2018.
  Society for Industrial and Applied Mathematics.
\newblock \href {http://arxiv.org/abs/1704.01937} {\path{arXiv:1704.01937}},
  \href {https://doi.org/10.1137/1.9781611975031.117}
  {\path{doi:10.1137/1.9781611975031.117}}.

\bibitem{BG18b}
Joshua Brakensiek and Venkatesan Guruswami.
\newblock An algorithmic blend of {LP}s and ring equations for promise {CSP}s.
\newblock In {\em Proceedings of the Thirtieth Annual {ACM-SIAM} Symposium on
  Discrete Algorithms, {SODA} 2019, San Diego, California, USA, January 6-9,
  2019}, pages 436--455, 2019.
\newblock \href {https://doi.org/10.1137/1.9781611975482.28}
  {\path{doi:10.1137/1.9781611975482.28}}.

\bibitem{BWZ20}
Alex Brandts, Marcin Wrochna, and Stanislav Živn{\'y}.
\newblock {The Complexity of Promise SAT on Non-Boolean Domains}.
\newblock In Artur Czumaj, Anuj Dawar, and Emanuela Merelli, editors, {\em 47th
  International Colloquium on Automata, Languages, and Programming (ICALP
  2020)}, volume 168 of {\em Leibniz International Proceedings in Informatics
  (LIPIcs)}, pages 17:1--17:13, Dagstuhl, Germany, 2020. Schloss
  Dagstuhl--Leibniz-Zentrum f{\"u}r Informatik.
\newblock URL: \url{https://drops.dagstuhl.de/opus/volltexte/2020/12424}, \href
  {https://doi.org/10.4230/LIPIcs.ICALP.2020.17}
  {\path{doi:10.4230/LIPIcs.ICALP.2020.17}}.

\bibitem{BJK05}
Andrei Bulatov, Peter Jeavons, and Andrei Krokhin.
\newblock Classifying the complexity of constraints using finite algebras.
\newblock {\em SIAM J. Comput.}, 34(3):720--742, March 2005.
\newblock \href {https://doi.org/10.1137/S0097539700376676}
  {\path{doi:10.1137/S0097539700376676}}.

\bibitem{Bul17}
Andrei~A. Bulatov.
\newblock A dichotomy theorem for nonuniform {CSP}s.
\newblock In {\em 2017 IEEE 58th Annual Symposium on Foundations of Computer
  Science (FOCS)}, pages 319--330, October 2017.
\newblock \href {https://doi.org/10.1109/FOCS.2017.37}
  {\path{doi:10.1109/FOCS.2017.37}}.

\bibitem{BKO19}
Jakub Bul{\'i}n, Andrei Krokhin, and Jakub Opr{\v{s}}al.
\newblock Algebraic approach to promise constraint satisfaction.
\newblock In {\em Proceedings of the 51st Annual ACM SIGACT Symposium on the
  Theory of Computing (STOC ’19)}, pages 602--613, New York, NY, USA, 2019.
  ACM.
\newblock \href {https://doi.org/10.1145/3313276.3316300}
  {\path{doi:10.1145/3313276.3316300}}.

\bibitem{Din07}
Irit Dinur.
\newblock The {PCP} theorem by gap amplification.
\newblock {\em J. ACM}, 54(3), June 2007.
\newblock \href {https://doi.org/10.1145/1236457.1236459}
  {\path{doi:10.1145/1236457.1236459}}.

\bibitem{DGKR05}
Irit Dinur, Venkatesan Guruswami, Subhash Khot, and Oded Regev.
\newblock A new multilayered {PCP} and the hardness of hypergraph vertex cover.
\newblock {\em SIAM J. Comput.}, 34(5):1129--1146, May 2005.
\newblock \href {https://doi.org/10.1137/S0097539704443057}
  {\path{doi:10.1137/S0097539704443057}}.

\bibitem{DRS05}
Irit Dinur, Oded Regev, and Clifford Smyth.
\newblock The hardness of 3-uniform hypergraph coloring.
\newblock {\em Combinatorica}, 25(5):519--535, September 2005.
\newblock \href {https://doi.org/10.1007/s00493-005-0032-4}
  {\path{doi:10.1007/s00493-005-0032-4}}.

\bibitem{FV98}
Tom{\'{a}}s Feder and Moshe~Y. Vardi.
\newblock The computational structure of monotone monadic {SNP} and constraint
  satisfaction: A study through datalog and group theory.
\newblock {\em SIAM J. Comput.}, 28(1):57--104, February 1998.
\newblock \href {https://doi.org/10.1137/S0097539794266766}
  {\path{doi:10.1137/S0097539794266766}}.

\bibitem{FKOS19}
Miron Ficak, Marcin Kozik, Miroslav Ol{\v{s}}{\'a}k, and Szymon Stankiewicz.
\newblock Dichotomy for symmetric {B}oolean {PCSP}s.
\newblock In Christel Baier, Ioannis Chatzigiannakis, Paola Flocchini, and
  Stefano Leonardi, editors, {\em 46th International Colloquium on Automata,
  Languages, and Programming (ICALP 2019)}, volume 132 of {\em Leibniz
  International Proceedings in Informatics (LIPIcs)}, pages 57:1--57:12,
  Dagstuhl, Germany, 2019. Schloss Dagstuhl--Leibniz-Zentrum fuer Informatik.
\newblock URL: \url{http://drops.dagstuhl.de/opus/volltexte/2019/10633}, \href
  {http://arxiv.org/abs/1904.12424} {\path{arXiv:1904.12424}}, \href
  {https://doi.org/10.4230/LIPIcs.ICALP.2019.57}
  {\path{doi:10.4230/LIPIcs.ICALP.2019.57}}.

\bibitem{GJ76}
M.~R. Garey and David~S. Johnson.
\newblock The complexity of near-optimal graph coloring.
\newblock {\em J. {ACM}}, 23(1):43--49, 1976.
\newblock \href {https://doi.org/10.1145/321921.321926}
  {\path{doi:10.1145/321921.321926}}.

\bibitem{GS20}
Venkatesan Guruswami and Sai Sandeep.
\newblock Rainbow coloring hardness via low sensitivity polymorphisms.
\newblock {\em SIAM Journal on Discrete Mathematics}, 34(1):520--537, 2020.
\newblock \href {http://arxiv.org/abs/https://doi.org/10.1137/19M127731X}
  {\path{arXiv:https://doi.org/10.1137/19M127731X}}, \href
  {https://doi.org/10.1137/19M127731X} {\path{doi:10.1137/19M127731X}}.

\bibitem{Hua13}
Sangxia Huang.
\newblock Improved hardness of approximating chromatic number.
\newblock In Prasad Raghavendra, Sofya Raskhodnikova, Klaus Jansen, and
  Jos{\'e} D.~P. Rolim, editors, {\em Approximation, Randomization, and
  Combinatorial Optimization. Algorithms and Techniques: 16th International
  Workshop, APPROX 2013, and 17th International Workshop, RANDOM 2013,
  Berkeley, CA, USA, August 21-23, 2013. Proceedings}, pages 233--243, Berlin,
  Heidelberg, 2013. Springer.
\newblock \href {https://doi.org/10.1007/978-3-642-40328-\_17}
  {\path{doi:10.1007/978-3-642-40328-\_17}}.

\bibitem{Jea98}
Peter Jeavons.
\newblock On the algebraic structure of combinatorial problems.
\newblock {\em Theor. Comput. Sci.}, 200(1-2):185--204, 1998.

\bibitem{JCG97}
Peter Jeavons, David Cohen, and Marc Gyssens.
\newblock Closure properties of constraints.
\newblock {\em J. ACM}, 44(4):527--548, July 1997.
\newblock \href {https://doi.org/10.1145/263867.263489}
  {\path{doi:10.1145/263867.263489}}.

\bibitem{kazda}
Alexandr Kazda.
\newblock Personal communication, 2020.

\bibitem{Kho02}
Subhash Khot.
\newblock Hardness results for coloring 3-colorable 3-uniform hypergraphs.
\newblock {\em The 43rd Annual IEEE Symposium on Foundations of Computer
  Science, 2002. Proceedings.}, pages 23--32, 2002.

\bibitem{Kho02UG}
Subhash Khot.
\newblock On the power of unique 2-prover 1-round games.
\newblock In {\em Proceedings of the Thiry-Fourth Annual ACM Symposium on
  Theory of Computing}, STOC '02, page 767–775, New York, NY, USA, 2002.
  Association for Computing Machinery.
\newblock \href {https://doi.org/10.1145/509907.510017}
  {\path{doi:10.1145/509907.510017}}.

\bibitem{KO19}
Andrei Krokhin and Jakub Opršal.
\newblock The complexity of 3-colouring h-colourable graphs.
\newblock In {\em 2019 IEEE 60th Annual Symposium on Foundations of Computer
  Science (FOCS)}, pages 1227--1239, 2019.
\newblock \href {https://doi.org/10.1109/FOCS.2019.00076}
  {\path{doi:10.1109/FOCS.2019.00076}}.

\bibitem{KZ17}
Andrei Krokhin and Stanislav {\v{Z}}ivn\'y, editors.
\newblock {\em The Constraint Satisfaction Problem: Complexity and
  Approximability}, volume~7 of {\em Dagstuhl Follow-Ups}.
\newblock Schloss Dagstuhl -- Leibniz-Zentrum f\"ur Informatik, 2017.

\bibitem{KOWZ}
Andrei~A. Krokhin, Jakub Oprsal, Marcin Wrochna, and Stanislav Zivn{\'{y}}.
\newblock Topology and adjunction in promise constraint satisfaction.
\newblock {\em CoRR}, abs/2003.11351, 2020.
\newblock URL: \url{https://arxiv.org/abs/2003.11351}, \href
  {http://arxiv.org/abs/2003.11351} {\path{arXiv:2003.11351}}.

\bibitem{Raz98}
Ran Raz.
\newblock A parallel repetition theorem.
\newblock {\em SIAM Journal on Computing}, 27(3):763--803, 1998.
\newblock \href {https://doi.org/10.1137/S0097539795280895}
  {\path{doi:10.1137/S0097539795280895}}.

\bibitem{VZ20}
Caterina Viola and Stanislav Živn{\'y}.
\newblock {The Combined Basic LP and Affine IP Relaxation for Promise VCSPs on
  Infinite Domains}.
\newblock In Javier Esparza and Daniel Kr{\'a}ľ, editors, {\em 45th
  International Symposium on Mathematical Foundations of Computer Science (MFCS
  2020)}, volume 170 of {\em Leibniz International Proceedings in Informatics
  (LIPIcs)}, pages 85:1--85:15, Dagstuhl, Germany, 2020. Schloss
  Dagstuhl--Leibniz-Zentrum f{\"u}r Informatik.
\newblock URL: \url{https://drops.dagstuhl.de/opus/volltexte/2020/12756}, \href
  {https://doi.org/10.4230/LIPIcs.MFCS.2020.85}
  {\path{doi:10.4230/LIPIcs.MFCS.2020.85}}.

\bibitem{wrochna}
Marcin Wrochna.
\newblock Personal communication, 2020.

\bibitem{WZ20}
Marcin Wrochna and Stanislav \v{Z}ivn\'{y}.
\newblock Improved hardness for h-colourings of g-colourable graphs.
\newblock In {\em Proceedings of the Thirty-First Annual ACM-SIAM Symposium on
  Discrete Algorithms}, SODA '20, page 1426–1435, USA, 2020. Society for
  Industrial and Applied Mathematics.

\bibitem{Zhu17}
Dmitriy Zhuk.
\newblock A proof of {CSP} dichotomy conjecture.
\newblock In {\em 2017 IEEE 58th Annual Symposium on Foundations of Computer
  Science (FOCS)}, pages 331--342, Oct 2017.
\newblock \href {https://doi.org/10.1109/FOCS.2017.38}
  {\path{doi:10.1109/FOCS.2017.38}}.

\bibitem{Zhu20}
Dmitriy Zhuk.
\newblock A proof of the {CSP} dichotomy conjecture.
\newblock {\em J. {ACM}}, 67(5):30:1--30:78, August 2020.
\newblock \href {https://doi.org/10.1145/3402029} {\path{doi:10.1145/3402029}}.

\end{thebibliography}

\end{document}